\newtheorem{thm}{Theorem}
\newtheorem{lem}{Lemma}
\newtheorem{cor}{Corollary}
\newtheorem{prop}{Proposition}
\newtheorem{obs}{Observation}
\newenvironment{keyword}{\par{\noindent\bf Keywords:}}
\begin{document}

\title{Single machine scheduling problems with uncertain parameters and the OWA criterion}

\author{Adam Kasperski\\ 
  {\small \textit{Institute of Industrial}}\\
  {\small \textit{Engineering and Management,}}\\
  {\small \textit{Wroc{\l}aw University of Technology,}}\\
  {\small \textit{Wybrze{\.z}e Wyspia{\'n}skiego 27,}}\\
  {\small \textit{50-370 Wroc{\l}aw, Poland}}\\
  {\small \textit{adam.kasperski@pwr.edu.pl}}
  \and
  Pawe{\l} Zieli{\'n}ski\\   
  {\small \textit{Institute of Mathematics}}\\
  {\small \textit{and Computer Science}}\\
  {\small \textit{Wroc{\l}aw University of Technology,}}\\
  {\small \textit{Wybrze{\.z}e Wyspia{\'n}skiego 27,}}\\
  {\small \textit{50-370 Wroc{\l}aw, Poland}}\\
  {\small \textit{pawel.zielinski@pwr.edu.pl}}}

\maketitle


\begin{abstract}
		In this paper a class of single machine scheduling problems is discussed. It is assumed that  job parameters, such as processing times, due dates, or weights are uncertain and their values are specified in the form of a discrete scenario set. The Ordered Weighted Averaging (OWA) aggregation operator is used to choose an optimal schedule. The OWA operator generalizes traditional criteria used in decision making under uncertainty, such as the maximum, average, median or Hurwicz criterion. It also allows us to extend the robust approach to scheduling by taking into account various attitudes of decision makers towards a risk. In this paper a general framework for solving single machine scheduling problems with the OWA criterion is proposed and some positive and negative computational results for two basic single machine scheduling problems are provided.
\end{abstract}

\begin{keyword}
scheduling, single machine, robust optimization, OWA criterion
\end{keyword}

\section{Introduction}

Scheduling under uncertainty is an important and extensively studied area of operations research and discrete optimization. The importance of this research direction results from the fact that in many real-world problems the precise values of parameters in scheduling models are not  known in advance. Thus, instead of possessing the exact values of the parameters, decision makers have rather a set of all their possible realizations, called a \emph{scenario set}. In some cases an additional information with this scenario set is available. If   a probability distribution  in the scenario set is known, then 
\emph{stochastic approach} can be used, which typically consists in minimizing the expected solution cost (see, e.g.~\cite{P02}). The unknown probability distribution can be upper bounded by
 a \emph{possibility distribution}, which leads to  \emph{possibilistic} (fuzzy) scheduling problems
  (see, e.g~\cite{KZ11f}). Finally, if no additional information with scenario set is provided,
   then \emph{robust approach} is usually used (see, e.g.~\cite{KY97}). In the robust optimization, we seek a solution minimizing a cost in the worst case, which usually leads  to applying the  
    \emph{minmax} or  \emph{minmax regret} criterion for choosing a solution.

The robust approach to decision making is often regarded as too conservative or pessimistic. It follows from the fact, that the minmax criterion takes only the worst-case scenarios into account, ignoring the information connected with the remaining scenarios. This criterion also assumes that decision makers are very risk averse, which is not always true. These drawbacks of the minmax criterion are well known in decision theory, and a detailed discussion on this topic can be found, for example,  in~\cite{LR57}. 
In this paper we will assume that a scenario set associated with scheduling problem is specified by enumerating all possible scenarios. Such a representation of scenario sets is called a \emph{discrete uncertainty representation} and has been described, for instance, in~\cite{KY97}.
Our goal is to generalize the minmax approach to scheduling problems under uncertainty by using the \emph{Ordered Weighted Averaging} aggregation operator (OWA for short) introduced by Yager in~\cite{YA88}.
The OWA operator is widely applied to aggregate criteria in multiobjective decision problems (see, e.g.,~\cite{GS12, YKB11, OS03}) but it can also be applied to choose a solution under the discrete uncertainty representation by identifying scenarios with objectives in a natural way. 
The OWA operator generalizes the classical criteria used in decision making under uncertainty such as the maximum, minimum, average, median, or Hurwicz criterion~\cite{LR57}. So, by using OWA we can extend the minmax approach, typically used in the
robust optimization. Furthermore, the weights used in the OWA operator allows us to model various attitudes of decision makers towards a risk.

Since we generalize the minmax approach to single machine scheduling problems under the 
discrete uncertainty representation, let us briefly recall the known results in this area (see also~\cite{KZ14} for a survey). 
The minmax version of the single machine scheduling problem with the total flow time criterion has been studied  in~\cite{YY02}, where it has been shown that the problem is NP-hard even for two processing time scenarios and strongly NP-hard when the number of processing time scenarios is a part of the
input (the unbounded case). A generalization of this problem, with the weighted sum of completion times criterion, has been recently discussed in~\cite{MNO13, FA10} where, in particular,  several inapproximability results for that problem have been established. We will describe these results in more detail later in this paper. In~\cite{AC08} the minmax version of the single machine scheduling problem with the maximum weighted tardiness criterion has been discussed, where it has been shown that some special cases of the problem are polynomially solvable. In this paper, we generalize and extend the algorithms proposed in~\cite{AC08}. In~\cite{ AAK11,AC08} the minmax version of the single machine scheduling problem with the number of late jobs criterion has been investigated. It has been shown in~\cite{AC08} that the problem is NP-hard for deterministic due dates and two processing time scenarios. On the other hand, it has been shown in~\cite{AAK11} that the problem with unit processing times and 
the number of due date scenarios being a part of the input is strongly NP-hard and hard to approximate within a factor less than~2.
In a more general version of this problem the weighted sum of late jobs is minimized.  This problem is known to be NP-hard for two weight scenarios~\cite{AV01}, strongly NP-hard and hard to approximate within any constant factor if the number of weight scenarios is a part of the input~\cite{KZ13}.

This paper is organized as follows. Section~\ref{sec1} presents
 a formulation of the general problem under consideration  as well as some its special cases. 
 The next two sections discuss two basic single machine scheduling problems. Namely,  Section~\ref{sec2} explores the problem with the maximum weighted tardiness cost function and Section~\ref{sec3} investigates the problem in which the cost function is the weighted sum of completion times. We show that both problems have various computational properties which depend on the weight distribution in the OWA operator. For some weight distributions the problems are polynomially solvable, while for other ones they become strongly NP-hard and  are also hard to approximate.

\section{Problem formulation}
\label{sec1}

Let $J=\{J_1,\dots,J_n\}$ be a set of jobs which must be processed on a single machine.  For simplicity of notations, we will identify job $J_j$ with its index~$j$. The set of jobs may be partially ordered by some precedence constraints. The notation $i\rightarrow j$ means that processing of job $j$ cannot start before processing of job $i$ is completed  (job $j$ is called a \emph{successor} of job $i$). 
  For each job $j$ the following parameters may be specified: a nonnegative \emph{processing time} $p_j$, a nonnegative \emph{due date} $d_j$ and a nonnegative \emph{weight} $w_j$. The due date $d_j$ expresses a desired completion time of $j$ and the weight $w_j$ expresses the importance of job~$j$ relative to the other jobs in the system.  In all scheduling models discussed in this paper we assume that all the jobs are ready for processing at time~0, in other words, each job has a release date equal to~0. We also assume that each job must be processed without any interruptions, so we consider only nonpreemptive models. Under these assumptions we can define a \emph{schedule} $\pi$ as a feasible permutation of the jobs, in which the precedence constraints among the jobs are preserved. The set of all feasible schedules will be denoted by~$\Pi$. 
  
  Let us denote by $C_j(\pi)$ the completion time of job $j$ in schedule $\pi$. We will use $f(\pi)$ to denote a cost of schedule $\pi$. The value of $f(\pi)$ depends on job completion  times and may also depend on job due dates or weights. In this paper we will investigate two 
   basic scheduling problems, in which the cost function is the maximum weighted tardiness, i.e. $f(\pi)=\max_{j\in J} w_j[C_j(\pi)-d_j]^+$ (we use the notation $[x]^+=\max\{0,x\}$) and the weighted sum of completion times, i.e. $f(\pi)=\sum_{j\in J} w_j C_j(\pi)$.  In the  deterministic case, we wish to find a feasible schedule which minimizes 
the cost $f(\pi)$, that is:
$$\mathcal{P}:\; \min_{\pi \in \Pi} f(\pi).$$

We now study a situation in which some or all problem parameters are ill-known.
Let~$S$ be a vector of the problem parameters  which may occur.  The vector $S$ is called a \emph{scenario}.  We will use $p_j(S)$, $d_j(S)$ and $w_j(S)$ to denote the processing time, due date, and weight of job $j$ under scenario $S$. A parameter is deterministic (precisely known) if its value is the same under each scenario. 
Let a \emph{scenario set} $\Gamma=\{S_1,\dots,S_K\}$ contain all possible scenarios, where $K> 1$.
In this paper, we
distinguish  the \emph{bounded case}, where~$K$ is bounded by a constant and 
the \emph{unbounded case}, where~$K$ is a part of the input.
 Now, the completion time of job~$j$ in $\pi$ and the cost of~$\pi$ depend on scenario $S_i \in \Gamma$ and will be denoted by~$C_j(\pi,S_i)$ and $f(\pi,S_i)$, respectively. 

Since scenario  set $\Gamma$ contains more than one scenario, an additional criterion is required to choose a reasonable solution.
In this paper we suggest  to use the  \emph{Ordered Weighted Averaging} aggregation operator
 (OWA for short) proposed by Yager in~\cite{YA88}. We now describe this criterion. Let $(f_1,\dots,f_K)$ be a vector of real numbers. Let us introduce a vector of weights $\pmb{v}=(v_1,\dots,v_K)$ such that $v_j\in [0,1]$ for all $j\in [K]$ ($[K]$ stands for the set $\{1,\dots,K\}$) and $v_1+\dots+v_K=1$. Let $\sigma$ be a permutation of $[K]$ such that $f_{\sigma(1)}\geq f_{\sigma(2)}\geq \dots \geq f_{\sigma(K)}$. The OWA operator is defined as follows:
$${\rm owa}_{\pmb{v}}(f_1,\dots,f_K)=\sum_{i\in[K]} v_i f_{\sigma(i)}.$$
The OWA operator has several natural properties which follow directly from its definition (see, e.g.~\cite{YKB11}). Since it is a \emph{convex combination} of the cost functions, 
$\min(f_1,\dots,f_K) \leq \mathrm{owa}_{\pmb{v}}(f_1,\dots,f_K) \leq \max(f_1,\dots,f_K)$. It is also \emph{monotonic}, i.e. if $f_j \geq g_j$ for all $j\in [K]$, then 
$\mathrm{owa}_{\pmb{v}}(f_1,\dots,f_K)\geq \mathrm{owa}_{\pmb{v}}(g_1,\dots,g_K)$,
\emph{idempotent}, i.e. if $f_1=\dots=f_k=a$, then $\mathrm{owa}_{\pmb{v}}(f_1,\dots,f_K)=a$ and 
\emph{symmetric}, i.e. its value does not depend on the order of the values $f_j$, $j\in [K]$. The OWA operator generalizes some important criteria used in decision making under uncertainty.   If $v_1=1$ and $v_j=0$ for $j=2,\dots,K$, then OWA becomes the maximum. If $v_K=1$ and $v_j=0$ for $j=1,\dots,K-1$, then OWA becomes the minimum. In general, if $v_k=1$ and $v_j=0$ for $j\in [K]\setminus\{k\}$, then OWA is the $k$th largest element among $f_1,\dots,f_K$. In particular, when $k=\lfloor K/2 \rfloor +1$, the $k$th element is the median.
If $v_j=1/K$ for all $j\in [K]$, i.e. when the  weights are \emph{uniform}, then OWA is the average (or the Laplace criterion). Finally, if $v_1=\alpha$ and $v_K=1-\alpha$ for some fixed $\alpha\in [0,1]$ and $v_j=0$ for the remaining weights, then we get the Hurwicz pessimism-optimism criterion. 

We now use the OWA operator to aggregate the costs of a given schedule $\pi$ under scenarios in ~$\Gamma$.  Let us define
$$\mathrm{OWA}(\pi)={\rm owa}_{\pmb{v}}(f(\pi,S_1),\dots,f(\pi,S_K))=\sum_{i\in [K]} v_i f(\pi,S_{\sigma(i)}),$$
where  $\sigma$ is a permutation of $[K]$ such that $f(\pi,S_{\sigma(1)})\geq \dots\geq f(\pi,S_{\sigma(K)})$. In this paper we examine the following optimization problem:
$$\textsc{Min-Owa}~\mathcal{P}: \min_{\pi\in \Pi} \mathrm{OWA} (\pi).$$
We will also investigate the special cases of the problem, which are listed in Table~\ref{tabsc}.
\begin{table}[ht]
\caption{Special cases of \textsc{Min-Owa}~$\mathcal{P}$.} \label{tabsc}
\begin{tabular}{ll}
 \hline
	  Name of the problem & Weight distribution \\ \hline
		\textsc{Min-Max}~$\mathcal{P}$ & $v_1=1$ and $v_j=0$ for $j=2,\dots,K$ \\
		\textsc{Min-Min}~$\mathcal{P}$ & $v_K=1$ and $v_j=0$ for $j=1,\dots,K-1$ \\
		\textsc{Min-Average}~$\mathcal{P}$ & $v_j=1/K$ for $j\in [K]$ \\
		\textsc{Min-Quant}$(k)$~$\mathcal{P}$ & $v_k=1$ and $v_j=0$ for $j\in [K]\setminus \{k\}$ \\
		\textsc{Min-Median}~$\mathcal{P}$ & $v_{\lfloor K/2 \rfloor +1}=1$ and $v_j=0$ for $j\in [K] \setminus \{\lfloor K/2 \rfloor +1\}$\\
		\textsc{Min-Hurwicz}~$\mathcal{P}$ & $v_1=\alpha$, $v_K=1-\alpha$, $\alpha\in [0,1]$ and $v_j=0$ for $j\in [K]\setminus\{1,K\}$ \\ \hline
\end{tabular}
\end{table} 

Notice that \textsc{Min-Owa}~$\mathcal{P}$ can be consistent with 
a concept of robustness. Namely, the  risk averse decision makers should choose nonincreasing weights, i.e. such that $v_1\geq v_2\geq \dots \geq v_K$. In the extreme case, this leads to the maximum criterion and the \textsc{Min-Max}~$\mathcal{P}$ problem. However, the OWA operator allows us to weaken the maximum criterion by taking more scenarios into account
As we will see in the next sections, the complexity of \textsc{Min-Owa}~$\mathcal{P}$ depends on the properties of the underlying deterministic  problem~$\mathcal{P}$ and the weights $v_1,\dots, v_K$.  One general and easy observation can be made.
Namely, if $\mathcal{P}$ is solvable in $T(n)$ time, then \textsc{Min-Min}~$\mathcal{P}$ is solvable in $O(K\cdot T(n))$ time.
Indeed, in order to solve the \textsc{Min-Min}~$\mathcal{P}$ problem it is enough to compute an optimal schedule $\pi_k$ under each scenario $S_k$, $k\in [K]$, and choose the one which has the minimum value of~$f(\pi_k, S_k)$, $k\in [K]$. For the remaining problems listed in Table~\ref{tabsc} no such general result can be established and their complexity depends on a structure of the deterministic problem~$\mathcal{P}$.

\section{The maximum weighted tardiness cost function} 
\label{sec2}

Let $T_j(\pi,S_i)=[C_j(\pi,S_i)-d_j(S_i)]^+$ be the \emph{tardiness} of job $j$ in $\pi$ under scenario $S_i$, $i\in [K]$. The cost of schedule $\pi$ under $S_i$ is the \emph{maximum weighted tardiness} under $S_i$, i.e. $f(\pi,S_i)=\max_{j\in J} w_j T_j(\pi,S_i)$. The underlying deterministic problem~$\mathcal{P}$  is denoted by 
 $1|prec|\max w_jT_j$ in 
 Graham's notation~\cite{GLLK79}.
  In this section we will also discuss a special case of this problem, denoted by $1||T_{\max}$,  with 
  unit job weights and
  no precedence constraints between the jobs. The deterministic $1|prec|\max w_j T_j$ problem can be solved in $O(n^2)$ time by the well known algorithm designed by Lawler~\cite{LA73}. It follows directly from the Lawler's algorithm that $1||T_{\max}$ can be solved in $O(n\log n)$ time by applying the EDD rule, i.e. by ordering the jobs with respect to  nondecreasing due dates. 
  
 This section contains the following results.  
 We will  consider first the case when $K$ is unbounded ($K$ is a part of the input).
 We will show that  the problems of minimizing the average  cost or median of the costs are then strongly NP-hard and also hard to approximate. On the other hand, we will prove that the problems of minimizing the maximum cost or the Hurwicz criterion are solvable in polynomial time. We will consider next the problem with a constant $K$. It turns out that in this case the general problem of minimizing the OWA criterion can be solved in pseudopolynomial time. Finally, we will propose an approximation algorithm, which can be efficiently applied to some particular weight distributions in the OWA criterion.

\subsection{Hardness of the problem}
\label{sec1_1}

The following theorem characterizes the complexity of the problem:
\begin{thm}
\label{thm1}
		 If the number of scenarios is unbounded, then
		\begin{enumerate}
				\item[(i)] \textsc{Min-Average}~$1||T_{\max}$ is strongly NP-hard and not approximable within $7/6-\epsilon$ for any $\epsilon>0$ unless P=NP,
				\item[(ii)]  \textsc{Min-Median}~$1|| T_{\max}$ is strongly NP-hard and not at all approximable unless P=NP.
		\end{enumerate}
		Furthermore, both assertions remain true even for  jobs with  unit processing times under all scenarios.
\end{thm}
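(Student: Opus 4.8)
The plan is to reduce from a suitable strongly NP-hard problem — the natural candidate is 3-PARTITION or, to get the explicit $7/6$ inapproximability bound, a restriction of a scheduling/covering problem with a known $7/6$ gap. Since the claim asks for hardness already with unit processing times under every scenario, the only free parameters in each scenario are the due dates (and the weights, but we are in $1||T_{\max}$ so all $w_j=1$). With unit processing times and no precedence, the set $\Pi$ of schedules is just all $n!$ permutations, and in any schedule $\pi$ the completion times are exactly $1,2,\dots,n$ in some order; so the cost $f(\pi,S_i)=\max_j [C_j(\pi,S_i)-d_j(S_i)]^+$ depends only on how the permutation aligns jobs against the scenario-$i$ due dates. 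I would engineer a scenario set so that $f(\pi,S_i)$ takes only two values, say $0$ or $1$ (by choosing integer due dates and unit times), with $f(\pi,S_i)=0$ iff some combinatorial constraint $i$ is ``satisfied'' by $\pi$. Then $\mathrm{owa}_{\pmb v}$ with uniform weights is just $\frac1K(\text{number of violated constraints})$, and the median is $0$ iff a majority of constraints are satisfied — turning the OWA-minimization into a combinatorial feasibility/maximization question.

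Concretely, for part (i) I would take an instance of a strongly NP-hard problem whose decision version asks whether all of $K$ ``local'' constraints can be simultaneously met by a single permutation, and whose optimization version (maximize the number of met constraints) has a known $7/6$-type gap — e.g. derived from MAX-3-SAT-style or from the vertex-cover-flavored reductions used for the related minmax flow-time results cited in the paper (\cite{YY02, MNO13, FA10}). Each scenario $S_i$ encodes one clause/constraint as a pattern of due dates over a constant-size block of jobs, padded with dummy jobs with huge due dates so they are never tardy; unit processing times are preserved throughout. A schedule pays $0$ under $S_i$ exactly when its restriction to the relevant block respects the EDD-type ordering forced by that scenario's due dates, and pays $1$ otherwise. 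Minimizing the average cost then equals $\frac1K$ times the minimum number of violated constraints, so a $(7/6-\epsilon)$-approximation would separate ``all satisfiable'' from ``at least a constant fraction unsatisfiable,'' contradicting P$\ne$NP. For part (ii), the same construction gives \textsc{Min-Median}: duplicate the hard instance's constraints so that the median of the cost vector is $0$ precisely when strictly more than half the scenarios have cost $0$; by further padding with a controlled number of always-$0$ and always-$1$ trivial scenarios one forces the median to be $0$ iff \emph{every} original constraint is satisfied, so distinguishing median $=0$ from median $>0$ solves an NP-hard decision problem — and since the only possible nonzero median value can be scaled up arbitrarily by rescaling due dates, there is no approximation with any finite ratio.

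The main obstacle, and the step I would spend the most care on, is designing the per-scenario due-date gadget so that (a) unit processing times are genuinely maintained, (b) the cost is a clean $0/1$ indicator of the intended local constraint with no spurious interaction between the ``active'' jobs of scenario $i$ and the padding jobs, and (c) the local constraints are expressive enough to carry the full strength of a strongly NP-hard problem while keeping the $7/6$ gap intact. The delicate point is (b): because a permutation is global, I must ensure that forcing good behavior on scenario $i$'s block does not accidentally help or hurt scenario $j$'s block; the standard fix is to give each scenario its own dedicated block of jobs with due dates in a disjoint numeric range, and to make all other jobs' due dates under $S_i$ so large (at least $n$) that they contribute $0$ regardless of position. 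Once the gadget is pinned down, verifying the reduction is routine: polynomial size is immediate (one scenario per constraint, $O(1)$ active jobs each), and the correspondence between violated constraints and OWA value is exactly the arithmetic identity above. The claim's final sentence (unit processing times suffice) then falls out automatically because the construction never uses any other processing times.
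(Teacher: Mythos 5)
Your template (one scenario per constraint, unit processing times, $0/1$ scenario costs, uniform weights so the average counts the costly scenarios, and padding plus a shifted quantile weight for the median) has the same shape as the paper's proof, which reduces from \textsc{Min 3-Sat}. But two of your central design choices break down. First, the source of the $7/6-\epsilon$ bound: the paper makes the cost under scenario $S_k$ equal to $1$ exactly when clause $C_k$ is \emph{satisfied}, so minimizing the average is literally minimizing the fraction of satisfied clauses, and the known $7/6-\epsilon$ inapproximability of \textsc{Min 3-Sat} (Avidor--Zwick) transfers through an approximation-preserving reduction. You instead count \emph{violated} constraints and argue via an ``all satisfiable versus a constant fraction unsatisfiable'' separation. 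That cannot work here: with unit (hence scenario-independent) processing times, a schedule of cost $0$ under every scenario exists iff EDD applied to the pointwise minima $\tilde{d}_j=\min_{i} d_j(S_i)$ gives zero maximum tardiness, which is a polynomial-time test; so no gadget of your type can make ``optimum $=0$'' NP-hard for \textsc{Min-Average}, and in any case a $0$-versus-positive gap would not yield the specific ratio $7/6$. Second, your ``dedicated block of jobs per scenario, everything else gets huge due dates'' design decouples the scenarios: each constraint then involves its own disjoint set of jobs with due dates in a disjoint numeric range, so every block can simultaneously occupy its own range of positions in the correct internal order, all constraints are met at once, and the instance is trivial. The hardness has to come from scenarios \emph{sharing} jobs: the paper gives every scenario the same $2n$ jobs $J_{x_i},J_{\overline{x}_i}$ with due dates $2i-1$ or $2i$, so the single global permutation encodes a truth assignment (which of $J_{x_i},J_{\overline{x}_i}$ occupies slot $2i-1$), and all clause scenarios are evaluated against that one assignment.

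For part (ii), your padding idea (add always-cheap and always-expensive scenarios and ask whether the relevant order statistic is $0$) is essentially the paper's trick, and the correct conclusion is simply that deciding whether $\mathrm{OWA}(\pi)\le 0$ is NP-hard, so no multiplicative approximation can exist; the remark about rescaling due dates to inflate the nonzero median is unnecessary and does not work (with unit processing times every cost is at most $n$). Since the paper obtains (ii) by padding the \textsc{Min 3-Sat} reduction of part (i), the gap in your part (i) construction leaves part (ii) without a working reduction as well.
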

\begin{proof}
	We show a polynomial time approximation preserving reduction from the \textsc{Min $k$-Sat} problem, which is defined as follows.  We are given boolean variables $x_1,\dots,x_n$ and a collection of clauses $C_1,\dots, C_m$, where each clause is a disjunction of at most $k$ literals (variables or their negations). We ask if there is an assignment to the variables which satisfies at most $L<m$ clauses. This problem is strongly NP-hard even for $k=2$ (see~\cite{AZ02, KM94, MR96}) and its optimization (minimization) version is hard to approximate within $7/6-\epsilon$ for any $\epsilon>0$ when $k=3$ (see~\cite{AZ02}).	
	\begin{table}[ht]
	  \centering
	  \caption{The due date scenarios for the formula $(x_1\vee \overline{x}_2 \vee \overline{x}_3)\wedge (\overline{x}_2 \vee \overline{x}_3 \vee x_4) \wedge (\overline{x}_1 \vee  x_2 \vee \overline{x}_4) \wedge (x_1 \vee x_2 \vee x_3) \wedge (x_1 \vee x_3 \vee \overline{x}_4)$.} \label{tab1}
			\begin{tabular}{l|lllll}
										 & $S_1$ & $S_2$ & $S_3$ & $S_4$ & $S_5$ \\ \hline
					$J_{x_1}$ & 1  & 2 & 2 & 1 & 1\\
					$J_{\overline{x}_1}$  & 2 & 2 & 1 & 2 & 2 \\  \hline
					$J_{x_2}$ &  4 & 4 & 3 & 3 & 4\\
					$J_{\overline{x}_2}$ & 3 & 3 & 4 & 4 & 4 \\  \hline
					$J_{x_3}$ & 6 & 6 & 6 & 5 & 5\\
					$J_{\overline{x}_3}$ & 5 & 5 & 6 & 6 & 6 \\  \hline
					$J_{x_4}$ &  8 & 7 & 8 & 8 & 8\\
					$J_{\overline{x}_4}$ & 8 & 8 & 7 & 8 & 7 \\  \hline
			\end{tabular}
		\end{table}
	
	We first consider assertion~(i).
	Given an instance of \textsc{Min 3-Sat}, we construct the corresponding instance of \textsc{Min-Average}~$1||T_{\max}$ in the following way.
	We create two jobs $J_{x_i}$ and $J_{\overline{x}_i}$ for each variable $x_i$, 
	$i\in [n]$.
	 The processing times and weights of all the jobs under all scenarios are equal to~1. 
	 The due dates of $J_{x_i}$ and $J_{\overline{x}_i}$ depend on scenario and will take the value of   
	 either  $2i-1$ or $2i$. Set $K=m$ and 
	 form  
	 $K$ scenario set~$\Gamma$ in the following way. Scenario~$S_k$ corresponds to clause $C_k=(l_1 \vee l_2 \vee l_3)$. For each $q=1,2,3$, if $l_q=x_i$, then the due date of $J_{x_i}$ is $2i-1$ and the due date of $J_{\overline{x}_i}$ is $2i$; if $l_q=\overline{x}_i$, then the due date of $J_{x_i}$ is $2i$ and the due date of $J_{\overline{x}_i}$ is $2i-1$; if neither $x_i$ nor $\overline{x}_i$ appears in $C_k$, then the due dates of $J_{x_i}$ and $J_{\overline{x}_i}$ are set to $2i$.
	  A sample reduction is shown in Table~\ref{tab1}.
	Finally, we fix $v_k=1/m$ for all $k\in [K]$.
	Let us define a subset of the schedules $\Pi'\subseteq \Pi$ such that each schedule $\pi\in \Pi'$ is of the form $\pi=(J_1,J_1',J_2,J_2',\dots,J_n,J_n')$, where $J_i,J_i'\in\{J_{x_i},J_{\overline{x}_i}\}$ for $i\in [n]$. Observe that $\Pi'$ contains exactly $2^n$ schedules and each such a schedule defines an assignment to the variables such that $x_i=0$ if $J_{x_i}$ is processed before $J_{\overline{x}_i}$ and $x_i=1$ otherwise.
	Assume that the answer to \textsc{Min 3-Sat} is yes. So, there is an assignment to the variables which satisfies at most $L$ clauses. Choose schedule~$\pi\in \Pi'$ which corresponds to this assignment. It is easily seen that if clause~$C_k$ is not satisfied, then all jobs in~$\pi$ under~$S_k$ are on-time and the maximum tardiness in~$\pi$ under~$S_k$ is~0.  On the other hand, if clause~$C_k$ is satisfied, then the maximum tardiness of~$\pi$ under~$S_k$ is~1. In consequence
	 $\frac{1}{K}\sum_{k \in [K]} f(\pi,S_k)\leq L/m$.
	Assume now that there is a schedule $\pi$ such that $\frac{1}{K}\sum_{k \in [K]} f(\pi,S_k)\leq L/m$. Notice that $L/m < 1$ by the nonrestrictive assumption that $L<m$. We first show that $\pi$ must belong to $\Pi'$. Suppose that $\pi \notin \Pi'$ and let $J_i$ ($J_i')$ be the last job in $\pi$ which is not placed properly, i.e. $J_i,(J_i')\notin\{J_{x_i},J_{\overline{x}_i}\}$. Then $J_i$ ($J_i'$) is at least one unit late under all scenarios and $\frac{1}{K}\sum_{k \in [K]} f(\pi,S_k)\geq 1$, a contradiction. Since $\pi\in \Pi'$ and all processing times are equal to~1 it follows that $f(\pi,S_k)\in \{0,1\}$ for all $k\in [K]$. Consequently, the maximum tardiness in~$\pi$ is equal to~1 under at most~$L$ scenarios and the assignment corresponding to $\pi$ satisfies at most $L$ clauses.
	The  above 
	reduction is approximation-preserving and the inapproximability result immediately holds.
	
	In order to prove assertion~(ii), it suffices to modify the previous reduction.
	Assume first that  $L<\lfloor m/2 \rfloor$. We then add to scenario set~$\Gamma$ 
	additional $m-2L$ scenarios with the due dates equal to~0 for all the jobs. So the number of scenarios~$K$ is $2m-2L$. 
	We fix $v_{m-L+1}=1$ and $v_k=0$ for the remaining scenarios.
	 Now, the answer to \textsc{Min 3-Sat} is yes, if and only if there is a schedule~$\pi$ 
	 whose maximum tardiness is positive under at most $L+m-2L=m-L$ scenarios. According to the definition of the weights $\mathrm{OWA}(\pi)=0$.
	Assume that $L>\lfloor m/2 \rfloor$. We then add to~$\Gamma$ 
	additional $2L-m$ scenarios with the due dates to~$n$ for all the jobs.  
	The number of scenarios~$K$ is then $2L$. We fix $v_{L+1}=1$ and $v_k=0$ for all the remaining scenarios.
	 Now, the answer to \textsc{Min 3-Sat} is yes, if and only if there is a schedule~$\pi$ whose cost is positive under at most $L$ scenarios. According to the definition of the weights $\mathrm{OWA}(\pi)=0$. We thus can see that it is NP-hard to check whether there is a schedule~$\pi$ such that ${\rm OWA}(\pi)\leq 0$ and the theorem follows.
\end{proof}

The next theorem characterizes the problem complexity when job processing times and due dates are deterministic and only job weights are uncertain.
\begin{thm}
\label{thm2}
		If the number of scenarios is unbounded, then
		\begin{enumerate}
				\item[(i)] \textsc{Min-Average}~$1||\max w_j T_j$ is strongly NP-hard.
				\item[(ii)]  \textsc{Min-Median}~$1||\max w_j T_j$  is strongly NP-hard and not at all approximable  unless P=NP.
		\end{enumerate}
		Furthermore, both assertions are true when all jobs have unit processing times under all scenarios and all job due dates are deterministic.
\end{thm}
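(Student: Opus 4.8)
The plan is to prove both parts by a reduction from \textsc{Min $k$-Sat} that re-uses the gadget of the proof of Theorem~\ref{thm1} but interchanges the roles of the due dates and the weights. As there, create two jobs $J_{x_i},J_{\overline{x}_i}$ for every variable $x_i$, all with unit processing time under every scenario, and let $\Pi'\subseteq\Pi$ be the $2^n$ schedules of the form $(J_1,J_1',\dots,J_n,J_n')$ with $J_i,J_i'\in\{J_{x_i},J_{\overline{x}_i}\}$, each encoding an assignment ($x_i=1$ iff $J_{x_i}$ precedes $J_{\overline{x}_i}$). Now the due dates are \emph{deterministic}: $d_{J_{x_i}}=d_{J_{\overline{x}_i}}=2i-1$. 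For $\pi\in\Pi'$ the first job of each pair completes on time and the second is tardy by exactly one unit, so $T_{J_{x_i}}(\pi)=1$ iff $x_i=1$ and $T_{J_{\overline{x}_i}}(\pi)=1$ iff $x_i=0$. The uncertainty is pushed into the weights: scenario $S_k$ encodes clause $C_k=(l_1\vee l_2\vee l_3)$ by giving weight $1$ to each of the (at most three) jobs $J_{l_q}$ and weight $0$ to all other jobs. Then for $\pi\in\Pi'$ we get $f(\pi,S_k)=1$ if the assignment of $\pi$ satisfies $C_k$ and $f(\pi,S_k)=0$ otherwise, exactly as the due-date gadget did in Theorem~\ref{thm1}.

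For part~(i), set $v_k=1/K$. The forward implication is immediate: an assignment satisfying at most $L$ clauses yields a $\pi\in\Pi'$ with $\sum_k f(\pi,S_k)\le L$. For the converse one must force $\pi\in\Pi'$, and a new difficulty appears: a job scheduled far too late still has tardiness $\ge 2$ under \emph{every} scenario (tardiness no longer depends on the scenario), yet its weight is $0$ in the clause scenarios, so its cost contribution may vanish. To detect such misplacements, append $n$ \emph{control} scenarios $\widehat S_1,\dots,\widehat S_n$, where in $\widehat S_i$ the two jobs $J_{x_i},J_{\overline{x}_i}$ have weight $m$ and all other jobs have weight $0$. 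The heart of the argument is a counting lemma,
$$\sum_{i=1}^{n}\max\{T_{J_{x_i}}(\pi),T_{J_{\overline{x}_i}}(\pi)\}\ \ge\ n,\quad\text{with equality iff }\pi\in\Pi',$$
obtained by writing the $i$-th term as $[q_i-(2i-1)]^+$, where $q_i$ is the later of the two positions occupied by pair~$i$, and observing that the $n$ distinct values $q_i$ satisfy $q_{(j)}\ge 2j$ for their sorted order, so $\sum_i q_i\ge n(n+1)$ and equality pins pair~$i$ to positions $\{2i-1,2i\}$. Hence $\sum_i f(\pi,\widehat S_i)=mn$ for $\pi\in\Pi'$ and $\ge m(n+1)$ otherwise. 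Taking $K=m+n$ and threshold $\theta=(L+mn)/(m+n)$, one checks that some schedule has $\mathrm{OWA}(\pi)\le\theta$ iff some assignment satisfies at most $L$ clauses; since $m>L$, the control scenarios dominate any $\pi\notin\Pi'$ and the clause scenarios separate good from bad assignments inside $\Pi'$. All data are polynomially bounded, which also gives the unit-processing-time, deterministic-due-date version, as that is precisely what the construction uses.

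For part~(ii) I would follow the pattern of the proof of Theorem~\ref{thm1}(ii): keep the clause scenarios and the control scenarios, and adjoin \emph{dummy} scenarios that shift the rank of the selected element. If $L<\lfloor m/2\rfloor$, add about $m-2L$ scenarios in which \emph{all} jobs receive a large common weight $W$; such a scenario has cost $W\max_j T_j(\pi)$, which is $\ge W$ under every $\pi$ and equals $W$ when $\pi\in\Pi'$, so it plays the role of the ``due dates equal $0$'' scenarios in Theorem~\ref{thm1}(ii). If $L>\lfloor m/2\rfloor$, instead add about $2L-m$ scenarios in which all jobs have weight $0$ (cost identically $0$), the analogue of the ``due dates equal $n$'' scenarios. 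Choosing the number of dummies so that the median index becomes the right order statistic, one arranges that $\mathrm{OWA}(\pi)=0$ iff $\pi\in\Pi'$ and its assignment satisfies at most $L$ clauses: a good $\pi\in\Pi'$ has enough zero-cost clause scenarios to pull the median down to $0$, while for $\pi\notin\Pi'$ some job has tardiness $\ge 2$ and — using the control and clause scenarios together with the fact that the first pair, and indeed any initial block of pairs, is always forced to contain a tardy job — enough scenarios have positive cost to keep the median strictly positive. Hence it is NP-hard to decide whether some schedule has $\mathrm{OWA}(\pi)\le 0$; since on ``yes'' instances the optimum is $0$ and on ``no'' instances every schedule has positive OWA, no polynomial-time approximation algorithm of any ratio can exist unless P=NP.

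The step I expect to be the main obstacle is the converse analysis in both parts — ruling out that a schedule outside $\Pi'$ beats the canonical ones. Unlike in Theorem~\ref{thm1}, a single misplaced job no longer forces a positive cost in every scenario, so one cannot argue scenario by scenario; the control scenarios together with the counting lemma must absorb the whole deficit at once. The truly delicate configuration is a schedule that places \emph{both} jobs of some variable-pair on time: such a pair encodes no truth value and, in isolation, would let a clause scenario be $0$ that ``should'' be positive. One must show every such degeneracy is more than paid for elsewhere — it forces an earlier pair to be overloaded — so that the relevant order statistic never drops below the value attained by the best $\pi\in\Pi'$. Making this accounting tight, rather than the gadget itself, is where the work lies.
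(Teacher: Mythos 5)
Your part~(i) is correct and is essentially the paper's own reduction: deterministic due dates $2i-1$, unit processing times, and clause-encoded weight scenarios. The only difference is the enforcement mechanism --- the paper appends a \emph{single} extra scenario in which every job has weight $m$ (any $\pi\notin\Pi'$ has a job with tardiness at least $2$ and so costs at least $2m$ there, versus exactly $m$ for $\pi\in\Pi'$), whereas you append $n$ per-pair control scenarios and prove a counting lemma. Both work; the paper's gadget is lighter. One small slip: under your stated convention ``$x_i=1$ iff $J_{x_i}$ precedes $J_{\overline{x}_i}$'' one gets $T_{J_{x_i}}=1$ iff $x_i=0$, so the convention must be flipped (as in the paper) for your subsequent claims to read correctly.

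Part~(ii), however, has a genuine gap --- precisely the one you flag at the end, and your sketched accounting does not close it. For the median only the \emph{number} of positive-cost scenarios matters, and your control scenarios do not help there: they are positive for \emph{every} $\pi\in\Pi'$ (each pair has exactly one tardy job), so by this count they penalize legitimate schedules more than cheaters. Concretely, take $\pi=(J_{x_1},J_{x_2},J_{\overline{x}_2},J_{x_3},J_{\overline{x}_3},\dots,J_{x_n},J_{\overline{x}_n},J_{\overline{x}_1})$: for $i\ge 2$ the pair-$i$ jobs occupy positions $2i-2$ and $2i-1$ and are both on time, so the only tardy job is $J_{\overline{x}_1}$. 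This schedule has positive cost only in $\widehat S_1$ and in those clause scenarios that contain the literal $\overline{x}_1$ (possibly none), i.e.\ in as few as one non-dummy scenario, whereas every $\pi\in\Pi'$ has positive cost in all $n$ control scenarios plus one clause scenario per satisfied clause. Since the dummy scenarios are positive (or zero) for all schedules alike, no choice of their number can separate the two cases: either this cheater attains median $0$ on a no-instance, or no legitimate schedule attains median $0$ on a yes-instance. Your claim that a degenerate pair ``forces an earlier pair to be overloaded'' yields only a single tardy job in total, which is exactly what the cheater exploits; the prefix bound $\sum_{j\le i}a_j\le 2i-1$ on the number $a_j$ of on-time jobs per pair permits $a_1=1$ and $a_i=2$ for all $i\ge 2$. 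So assertion~(ii) is not established by your construction and needs a different idea (the paper itself only asserts that (ii) is ``very similar'' to Theorem~\ref{thm1}(ii), where the forcing argument works because a misplaced job is tardy \emph{with positive weight} under every scenario --- a property your weight-scenario gadget loses).
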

\begin{proof}
As in the proof of Theorem~\ref{thm1}, we show a polynomial time reduction from the \textsc{Min 3-Sat} problem. We start by proving assertion (i).  We create two jobs $J_{x_i}$ and $J_{\overline{x}_i}$ for each variable $x_i$. The processing times of these jobs under all scenarios are 1 and their due dates are equal to $2i-1$.
	Now for each clause $C_k=(l_1 \vee l_2 \vee l_3)$ we form the weight scenario $S_k$ as follows:  for each $q=1,2,3$, if $l_q=x_i$, then the weight of $J_{x_i}$ is $1$ and the weight of $J_{\overline{x}_i}$ is $0$; if $l_q=\overline{x}_i$, then the weight of $J_{\overline{x}_i}$ is $1$ and the weight of $J_{x_i}$ is $0$; if neither $x_i$ nor $\overline{x}_i$ appears in $C_k$, then the weights of $J_{x_i}$ and $J_{\overline{x}_i}$ are 0. We also add one additional scenario $S_{m+1}$ under which the weight of each job is equal to $m$.
	We set $K=m+1$ and	
	 fix $v_k=1/(m+1)$ for each $k\in [K]$.  We define the subset of schedules $\Pi'\subseteq \Pi$ as in the proof of Theorem~\ref{thm1}.
	
	We will show that the answer to \textsc{Min 3-Sat} is yes if and only if there is a schedule $\pi$ such that ${\rm OWA}(\pi)\leq (m+L)/(m+1)$. Assume that
	 the answer to \textsc{Min 3-Sat} is yes. Let $\pi\in \Pi'$ be the schedule corresponding to the assignment  which satisfies at most $L$ clauses (see the proof of Theorem~\ref{thm1}). It is easy 
	 to verify that $f(\pi,S_k)=0$ if $C_k$ is not satisfied and $f(\pi,S_k)=1$ if $C_k$ is satisfied. Furthermore, $f(\pi,S_{m+1})=m$. 
	 Hence ${\rm OWA}(\pi)\leq (m+L)/(m+1)$. Assume now that ${\rm OWA}(\pi)\leq (m+L)/(m+1)$. Then $\pi$ must belong to $\Pi'$ since otherwise $f(\pi, S_{m+1})\geq 2m$ and ${\rm OWA}(\pi)\geq 2m/(m+1)$, which contradicts the assumption that $L<m$. It must hold $f(\pi, S_{m+1})=m$ and $f(\pi, S_i)\in \{0,1\}$ for each $i\in [K]$. 
	 Consequently $f(\pi,S_i)=1$ under at most $L$ scenarios, which means that the assignment corresponding to $\pi$ satisfies at most $L$ clauses and 
	 the answer to \textsc{Min 3-Sat} is yes. 
	 
	 The proof of assertion~(ii) is very similar to the corresponding proof in Theorem~\ref{thm1}.
\end{proof}

\subsection{Polynomially and pseudopolynomially solvable cases}

In this section we identify some special cases of the 
\textsc{Min-Owa}~$1|prec|\max w_j T_j$
problem which are polynomially or pseudopolynomially solvable. 

\subsubsection{The maximum criterion}

It has been shown in~\cite{AC08} that  \textsc{Min-Max}~$1|prec|T_{\max}$  is solvable in $O(Kn^2)$ time. In this section, we will show that  more general version of the problem with arbitrary nonnegative job weights, \textsc{Min-Max}~$1|prec|\max w_j T_j$, 
is solvable in $O(Kn^2)$ time as well.
 In the construction of the algorithm,  we will use some ideas from~\cite{K05, VD10}.  Furthermore, the algorithm with some minor modifications will be a basis for solving other special cases of  
 \textsc{Min-Owa}~$1|prec|\max w_j T_j$. 
 In this section the OWA operator is the maximum, so 
  ${\rm OWA}(\pi)=\max_{i\in [K]} f(\pi,S_i)$. By interchanging
   the maximum operators and some easy transformations, 
   we can express the value of $\mathrm{OWA}(\pi)$ as follows:
\begin{equation}
\label{defTF}
\mathrm{OWA}(\pi)  = \max_{j\in J}\max_{i\in [K]} [w_j(S_i)(C_j(\pi,S_i)-d_j(S_i))]^+.
\end{equation}
Fix a nonempty subset of jobs $D\subseteq J$ and define 
\begin{equation}
\label{defFj}
F_j(D)=\max_{i\in [K]} [w_j(S_i) (\sum_{k\in D} p_k(S_i)-d_j(S_i))]^+.\\
\end{equation}
The following proposition  immediately follows from the fact that all job processing times and weights are nonnegative:
\begin{prop}
\label{prop2}
If  $D_2\subseteq D_1$, then for any $j\in J$ it holds $F_j(D_1)\geq F_j(D_2)$.
\end{prop}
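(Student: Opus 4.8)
The plan is a direct monotonicity argument, handled scenario by scenario and then closed by taking the maximum. First I would fix an arbitrary job $j\in J$ and an arbitrary scenario $S_i\in\Gamma$, and compare the two expressions that appear inside the maxima defining $F_j(D_1)$ and $F_j(D_2)$. Since $D_2\subseteq D_1$, we may split the sum as $\sum_{k\in D_1}p_k(S_i)=\sum_{k\in D_2}p_k(S_i)+\sum_{k\in D_1\setminus D_2}p_k(S_i)$, and because all processing times are nonnegative the last sum is $\geq 0$; hence $\sum_{k\in D_1}p_k(S_i)\geq\sum_{k\in D_2}p_k(S_i)$.

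Next I would propagate this inequality through the remaining operations, using that each of them is nondecreasing. Subtracting $d_j(S_i)$ preserves the inequality; multiplying by the nonnegative weight $w_j(S_i)$ preserves it as well, giving
\[
w_j(S_i)\Bigl(\sum_{k\in D_1}p_k(S_i)-d_j(S_i)\Bigr)\;\geq\; w_j(S_i)\Bigl(\sum_{k\in D_2}p_k(S_i)-d_j(S_i)\Bigr);
\]
and finally applying the map $x\mapsto[x]^+=\max\{0,x\}$, which is nondecreasing, yields
\[
\bigl[w_j(S_i)(\textstyle\sum_{k\in D_1}p_k(S_i)-d_j(S_i))\bigr]^+\;\geq\;\bigl[w_j(S_i)(\textstyle\sum_{k\in D_2}p_k(S_i)-d_j(S_i))\bigr]^+.
\]
To conclude, I would take the maximum over $i\in[K]$ on both sides: the right-hand side attains its maximum at some scenario $S_{i^*}$, and by the inequality above the left-hand side at $S_{i^*}$ is at least that value, so $F_j(D_1)\geq F_j(D_2)$. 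Since $j$ was arbitrary, this proves the proposition.

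Honestly, there is no real obstacle here: the statement is a routine consequence of the nonnegativity of the processing times and weights together with the fact that subtraction of a constant, multiplication by a nonnegative scalar, and the positive-part operator are all order-preserving. The only point that needs a word of care is that one must compare the two families of scenario-indexed values termwise \emph{before} taking maxima, rather than trying to argue directly about the maximizing scenarios, since the scenario achieving the maximum for $D_1$ need not be the one achieving it for $D_2$.
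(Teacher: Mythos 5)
Your argument is correct and matches the paper's reasoning: the paper states the proposition follows immediately from the nonnegativity of processing times and weights, which is exactly the termwise monotonicity chain you spell out before taking the maximum over scenarios. Nothing is missing.
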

Let $pred(\pi,j)$ be the set of jobs containing job $j$ and all the jobs that precede $j$ in $\pi$. 
Since $C_j(\pi,S_i)=\sum_{k\in pred(\pi,j)} p_k(S_i)$, 
the maximum cost of~$\pi$  over $\Gamma$ can be expressed as follows (see~(\ref{defTF}) and~(\ref{defFj})): 
\begin{equation}
\label{defFtmax}
		{\rm OWA}(\pi)=\max_{j\in J} F_j(pred(\pi,j)).
\end{equation}

Consider the algorithm shown in the form of Algorithm~\ref{alg1}.
\begin{algorithm}
    \begin{algorithmic}[1]
        \STATE $D := \{1,\dots,n\}$
        \FORALL {$i\in[K]$} \label{alg1li2}
           \STATE $p(S_i):= \sum_{k\in D} p_k(S_i)$ \label{alg1li3}
        \ENDFOR \label{alg1li4}
        \FOR {$r:= n$ $\bf downto$ $1$}
        \STATE Find $j \in D$, which has no successor in $D$ and has the minimum value of $F_j(D)=\max_{i\in[K]}[w_j(S_i)(p(S_i)-d_j(S_i))]^+$\label{alg1l3}
        \STATE $\pi(r):= j$
        \STATE $D := D\setminus \{j\}$
        \FORALL {$i\in[K]$} \label{alg1li9}
           \STATE $p(S_i):= p(S_i)- p_j(S_i)$
        \ENDFOR \label{alg1li11}
        \ENDFOR
        \STATE \begin{bf} return \end{bf} $\pi$
\end{algorithmic}
\caption{Algorithm for solving \textsc{Min-Max}~$1|prec|\max w_j T_j$.} \label{alg1}
\end{algorithm}
\begin{thm}
\label{thm3}
 Algorithm~\ref{alg1} computes an optimal schedule for \textsc{Min-Max}~$1|prec|\max w_j T_j$ in $O(Kn^2)$ time.
\end{thm}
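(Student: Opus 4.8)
The plan is to prove correctness by a backward exchange argument in the spirit of Lawler's algorithm, with Proposition~\ref{prop2} supplying the crucial monotonicity, and then to read off the running time directly from the pseudocode.

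First I would record the easy structural fact that in any feasible schedule the job occupying the last position has no successor in~$J$, and conversely any job with no successor in~$J$ may legally be appended at the end of a topological order of the remaining jobs. Consequently the job~$j$ selected by Algorithm~\ref{alg1} in the first pass of the outer loop (when $D=J$) is eligible for the last position and, by construction, minimises $F_j(J)$ among all such eligible jobs.

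The core of the argument is the claim that some optimal schedule places~$j$ last. To prove it I would take an arbitrary optimal schedule~$\pi^*$, set $j'=\pi^*(n)$, and, if $j'\ne j$, move~$j$ from its current position~$t$ to the end, shifting the jobs originally in positions $t+1,\dots,n$ one slot to the left; the resulting schedule~$\pi'$ is feasible because~$j$ has no successor in~$J$ (so no relation $j\to\ell$ is broken, every $\ell\to j$ is now satisfied, and the relative order of all other jobs is preserved). Using $\mathrm{OWA}(\pi)=\max_{k\in J}F_k(pred(\pi,k))$ from~(\ref{defFtmax}) I would check term by term that $\mathrm{OWA}(\pi')\le\mathrm{OWA}(\pi^*)$: for a job before position~$t$ the predecessor set is unchanged; for each shifted job the predecessor set loses exactly~$j$, so its $F$-value cannot increase by Proposition~\ref{prop2}; and for~$j$ itself $pred(\pi',j)=J$, while the greedy choice gives $F_j(J)\le F_{j'}(J)$ and $F_{j'}(J)=F_{j'}(pred(\pi^*,j'))\le\mathrm{OWA}(\pi^*)$ since $j'=\pi^*(n)$. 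Hence every term is at most $\mathrm{OWA}(\pi^*)$ and~$\pi'$ is optimal as well. Once~$j$ is pinned to the last position, the cost of any such schedule is $\max\{F_j(J),\ \max_{k\ne j}F_k(pred(\pi,k))\}$ with $F_j(J)$ a constant independent of the arrangement of the other $n-1$ jobs, so minimising over them is precisely an instance of \textsc{Min-Max}~$1|prec|\max w_jT_j$ on $J\setminus\{j\}$ with the induced precedence relation — exactly what the remaining iterations of Algorithm~\ref{alg1} carry out. Induction on~$n$ (base case $n=1$ trivial) then establishes that the returned schedule is optimal.

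For the time bound, the initialisation (lines~\ref{alg1li2}--\ref{alg1li4}) costs $O(Kn)$; each of the~$n$ outer iterations spends $O(K)$ per candidate to evaluate $F_j(D)=\max_{i\in[K]}[w_j(S_i)(p(S_i)-d_j(S_i))]^+$ from the maintained partial sums $p(S_i)$, hence $O(Kn)$ for line~\ref{alg1l3}, plus $O(K)$ to update the $p(S_i)$ in lines~\ref{alg1li9}--\ref{alg1li11}; testing ``$j$ has no successor in~$D$'' is handled with $O(n^2)$ total bookkeeping over the whole run (e.g. maintaining for each job a counter of its successors still in~$D$ and decrementing these when a job leaves~$D$), which does not dominate. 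This gives $O(Kn^2)$ overall. I expect the exchange step to be the only real obstacle: one must notice that although relocating~$j$ to the last position can raise its own cost, the greedy rule forces that cost to stay at most $F_{j'}(J)\le\mathrm{OWA}(\pi^*)$, making the move harmless, and that Proposition~\ref{prop2} is exactly what prevents the other jobs' costs from increasing.
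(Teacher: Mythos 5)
Your proof is correct and follows essentially the same route as the paper's: the key exchange step (relocating the greedily chosen job into the slot occupied by the optimal schedule's job, using Proposition~\ref{prop2} for the jobs whose predecessor sets shrink and the greedy minimality of $F_j(\cdot)$ for the relocated job) is identical in substance. The only difference is packaging — you argue by induction on $n$ after fixing the last position, whereas the paper phrases it as a longest-agreeing-suffix/minimal-counterexample argument — and your running-time analysis matches as well.
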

\begin{proof}
Let $\pi$ be the schedule returned by the algorithm. It is clear that $\pi$ is feasible. Let us renumber the jobs so that $\pi=(1,2,\dots,n)$. Let $\sigma$ be an optimal minmax schedule.  Assume that $\sigma(j)=j$ for $j=k+1,\dots,n$, where $k$ is the smallest position among all the optimal minmax schedules. If $k=0$, then we are done, because $\pi=\sigma$ is optimal. Assume that $k>0$, and so $k\neq \sigma(k)=i$. Let us move the job $k$ just after $i$ in $\sigma$ and denote the resulting schedule as $\sigma'$ 
(see 
Figure~\ref{figtmax}).
 Schedule $\sigma'$ is feasible, because $\pi$ is feasible. 

\begin{figure}[ht]
	\centering
		\includegraphics{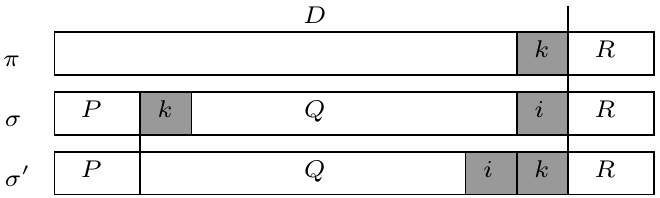}
	\caption{Illustration of the proof of Theorem~\ref{thm1}.}\label{figtmax}
\end{figure}

We need only consider three cases:
\begin{enumerate}
		\item If $j \in P \cup R$, then $pred(\sigma',j)=pred(\sigma,j)$ and $F_j(pred(\sigma',j))=F_j(pred(\sigma,j))$.
		\item If $j \in Q\cup\{i\}$, then $pred(\sigma',j)\subseteq pred(\sigma,j)$ and,  by Proposition~\ref{prop2}, $F_j(pred(\sigma',j))\leq F_j(pred(\sigma,j))$.
		\item If $j=k$, then $F_j(D)\leq F_i(D)$ from the construction of Algorithm~1. 
		Since $pred(\sigma,i)=pred(\sigma',j)=D$, we have $F_j(pred(\sigma',j))\leq F_i(pred(\sigma,i))$.
\end{enumerate}	
From the above three cases and equality~(\ref{defFtmax}), we conclude that 
$${\rm OWA}(\sigma')=\max_{j\in J} F_j(pred(\sigma',j))\leq \max_{j\in J} F_j(pred(\sigma,j))={\rm OWA}(\sigma),$$ 
so $\sigma'$ is also optimal, which contradicts the minimality of $k$.  Computing  $F_j(D)$ for a given $j\in D$ in line~\ref{alg1l3} 
		requires $O(K)$ time
		(note that $p(S_i)$, $i\in [K]$, store the values of $ \sum_{k\in D} p_k(S_i)$
		that have been  computed in lines~\ref{alg1li2}-\ref{alg1li4}
		and they are updated in lines~\ref{alg1li9}-\ref{alg1li11}), 
		and thus line~\ref{alg1l3} can be executed in $O(Kn)$ time. Consequently,
		 the overall running time of the algorithm is $O(Kn^2)$.
\end{proof}

\subsubsection{The Hurwicz criterion}
In this section we explore the problem with the Hurwicz criterion. We will examine the case in
 which $\alpha \in (0,1)$ as the boundary cases with  $\alpha$ equal to 0 (the minimum criterion) or 1 (the maximum criterion) are solvable in $O(Kn^2)$ time. 
\begin{thm}
 \textsc{Min-Hurwicz}~$1|prec|\max w_j T_j$ is solvable in  $O(K^2 n^4 )$ time.
\end{thm}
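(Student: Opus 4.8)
The plan is to reduce \textsc{Min-Hurwicz}~$1|prec|\max w_jT_j$, for $\alpha\in(0,1)$ (the cases $\alpha\in\{0,1\}$ being already covered in $O(Kn^2)$ time), to polynomially many threshold-constrained min--max problems, each solved by a small modification of Algorithm~\ref{alg1}. Under the Hurwicz weights, $\mathrm{OWA}(\pi)=\alpha\max_{i\in[K]}f(\pi,S_i)+(1-\alpha)\min_{i\in[K]}f(\pi,S_i)$. First I would guess the scenario attaining the inner minimum: for each $\ell\in[K]$ define $Z_\ell=\min_{\pi\in\Pi}\bigl(\alpha\max_{i\in[K]}f(\pi,S_i)+(1-\alpha)f(\pi,S_\ell)\bigr)$. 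Since $f(\pi,S_\ell)\ge\min_{i}f(\pi,S_i)$ and $1-\alpha\ge 0$, the quantity minimised in the definition of $Z_\ell$ is at least $\mathrm{OWA}(\pi)$, hence $Z_\ell$ is at least the optimum of \textsc{Min-Hurwicz}; and taking $\ell$ to be a minimising scenario of an optimal schedule shows that $Z_\ell$ equals the optimum for that $\ell$. Therefore the optimum equals $\min_{\ell\in[K]}Z_\ell$, and it suffices to compute each $Z_\ell$.

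Fix $\ell$. For $\theta\ge 0$ let $a(\theta)=\min\{\max_{i\in[K]}f(\pi,S_i):\pi\in\Pi,\ f(\pi,S_\ell)\le\theta\}$, with the convention $a(\theta)=+\infty$ when the constraint is infeasible. A routine pair of inequalities — evaluate the objective of $Z_\ell$ at an optimiser of $a(\theta)$, and conversely take $\theta=f(\pi,S_\ell)$ for an optimiser of $Z_\ell$ — shows $Z_\ell=\min_{\theta\ge 0}\bigl(\alpha\,a(\theta)+(1-\alpha)\theta\bigr)$. The function $a$ is nonincreasing and piecewise constant, and, since $\alpha<1$, the map $\theta\mapsto\alpha\,a(\theta)+(1-\alpha)\theta$ is strictly increasing on every interval on which $a$ is constant; hence its minimum is attained at one of the breakpoints of $a$ (or at the least feasible $\theta$).

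To evaluate $a(\theta)$ for a fixed $\theta$, note that $f(\pi,S_\ell)\le\theta$ is equivalent to $G_j(pred(\pi,j))\le\theta$ for all $j$, where $G_j(D)=[w_j(S_\ell)(\sum_{k\in D}p_k(S_\ell)-d_j(S_\ell))]^+$ is nondecreasing in $D$ by the argument of Proposition~\ref{prop2}. Thus $a(\theta)$ is computed by the following variant of Algorithm~\ref{alg1}: at each step, among the jobs $j\in D$ that have no successor in $D$ \emph{and} satisfy $G_j(D)\le\theta$, pick one minimising $F_j(D)$ (cf.~(\ref{defFj})), place it last, and delete it from $D$; if at some step no eligible job satisfies the threshold, return $a(\theta)=+\infty$. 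Correctness is proved exactly as in Theorem~\ref{thm3}: moving the greedily chosen job $j_0$ to the last free position does not increase $F_j(pred(\cdot,j))$ for any $j$ (by~(\ref{defFtmax}) and Proposition~\ref{prop2}), and does not violate the threshold either, because $j_0$ then has completion-time set $D$ with $G_{j_0}(D)\le\theta$ while the completion time under $S_\ell$ of every other job does not increase. As in Algorithm~\ref{alg1}, this runs in $O(Kn^2)$ time.

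What is left — and what I expect to be the crux — is to show that for each $\ell$ only $O(n^2)$ values of $\theta$ have to be tried, i.e.\ that $a(\cdot)$ has $O(n^2)$ breakpoints; equivalently, that there are $O(n^2)$ Pareto-optimal values of $f(\pi,S_\ell)$ with respect to the two objectives $\pi\mapsto\max_{i\in[K]}f(\pi,S_i)$ and $\pi\mapsto f(\pi,S_\ell)$. A naive enumeration fails here because $f(\pi,S_\ell)$ can take pseudopolynomially many values; instead one argues that the output of the variant of Algorithm~\ref{alg1} above changes only $O(n^2)$ times as $\theta$ decreases, since a change can occur only when $\theta$ crosses one of the quantities $w_j(S_\ell)(\sum_{k\in D}p_k(S_\ell)-d_j(S_\ell))$ associated with a job $j$ and one of the $O(n)$ nested sets $D$ produced along a run, and an exchange/counting argument bounds the number of relevant crossings by $O(n^2)$. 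Granting this, $Z_\ell$ is computed by running the $O(Kn^2)$-time subroutine on $O(n^2)$ candidate thresholds — $O(Kn^4)$ time — and summing over $\ell\in[K]$ gives the claimed $O(K^2n^4)$ bound. The reduction, the constrained subroutine, and its correctness are routine once Theorem~\ref{thm3} is in hand; the genuine difficulty is the polynomial bound on the breakpoints of $a$ together with an explicit description of the candidate set.
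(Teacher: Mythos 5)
Your decomposition is exactly the paper's: guess the scenario $\ell$ attaining the inner minimum, reduce to $K$ problems of the form $\min_\pi\bigl(\alpha\max_i f(\pi,S_i)+(1-\alpha)f(\pi,S_\ell)\bigr)$, rewrite each as $\min_\theta\bigl(\alpha\,a(\theta)+(1-\alpha)\theta\bigr)$ where $a$ is the threshold-constrained min--max value, evaluate $a(\theta)$ by restricting the greedy choice in Algorithm~\ref{alg1} to jobs satisfying the scenario-$\ell$ deadline induced by $\theta$, and observe that the minimum is attained at a breakpoint of the nonincreasing step function $a$. All of this matches the paper, including the correctness argument for the constrained greedy (infeasibility of the restricted choice certifies $a(\theta)=+\infty$).

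The one place where you stop short is precisely the step you yourself flag as the crux: the $O(n^2)$ bound on the number of breakpoints, which you assume (``granting this'') rather than prove. Your candidate set is right --- the next breakpoint above the current $\theta$ is found by checking, for each job $j$ placed with residual set $D$ in the current run, the values $w_i(S_\ell)[\sum_{k\in D}p_k(S_\ell)-d_i(S_\ell)]^+$ for the other jobs $i\in D$, and taking the smallest one at which the greedy choice changes; this costs $O(Kn^2)$ per breakpoint. But counting ``$O(n)$ jobs times $O(n)$ nested sets'' does not by itself bound the total number of breakpoints, because the nested sets themselves change from one run to the next. The paper closes this with a monotonicity argument you should make explicit: if $\pi_l,\pi_{l-1},\dots,\pi_1$ are the optimal schedules at successive breakpoints in order of \emph{decreasing} $\theta$, then each transition moves at least one job strictly to the left, and no job's position can ever increase again as $\theta$ decreases further (a job infeasible at position $i$ under the threshold is also infeasible at positions $i+1,\dots,n$, since its completion time under $S_\ell$ only grows). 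Hence the job ending at position $i$ can move at most $i-1$ times, the total number of transitions is at most $\sum_i(i-1)=O(n^2)$, and the claimed $O(Kn^4)$ per scenario and $O(K^2n^4)$ overall follow. With that paragraph added, your proof coincides with the paper's.
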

\begin{proof}
 The Hurwicz criterion can be expressed as follows:
$$\textrm{OWA}(\pi)=\alpha\max_{i\in [K]}f(\pi,S_i) +(1-\alpha)\min_{i\in [K]} f(\pi,S_i).$$
Let us define 
$$H_k(\pi)=\alpha\max_{i\in [K]}f(\pi,S_i) +(1-\alpha) f(\pi,S_k).$$
Hence
\[
\min_{\pi \in \Pi} \textrm{OWA}(\pi)=\min_{k\in [K]} \min_{\pi \in \Pi} H_k(\pi),
\]
and the problem of minimizing the Hurwicz criterion reduces to solving $K$ auxiliary problems consisting in minimizing $H_k(\pi)$ for a fixed $k\in [K]$. 
Let us fix $k\in [K]$ and $t\geq 0$, and  define $\Pi_k(t)=\{\pi\in \Pi\,:\,f(\pi,S_k)\leq t\}\subseteq \Pi$ as the set of feasible schedules whose cost under $S_k$ is at most $t$.
Define 
$$\Psi_k(t)= \min_{\pi \in \Pi_k(t)}\max_{i\in [K]} f(\pi,S_i).$$
Hence
\begin{equation}
\label{defphi}
	\min_{\pi \in \Pi} H_k(\pi)=\min_{t\in [\underline{t}, \overline{t}]} \alpha \Psi_k(t)+(1-\alpha) t,
\end{equation}
where $\underline{t}=\min_{\pi \in \Pi} f(\pi, S_k)$ (for $t<\underline{t}$  it holds $\Pi_k(t)=\emptyset$),
 and $\overline{t}=\min_{\pi \in \Pi}\max_{i\in [K]} f(\pi,S_i)$, which
is due to the fact that $\max_{i\in [K]} f(\pi,S_i)\geq f(\pi,S_k)$.
Computing the value of~$ \Psi_k(t)$ for a given $t\in [\underline{t}, \overline{t}]$ can be done
by a slightly  modified Algorithm~\ref{alg1}. 
It is enough to replace line~\ref{alg1l3} of Algorithm~\ref{alg1} with 
the following line:
\[ 
\ref{alg1l3}': \text{find $j \in D_k(t)$, which has no successor in $D$, and has a minimum value of $F_j(D)$},
\]
 where $D_k(t)=\{j\in D: [w_j(S_k)(p(S_k)-d_j(S_k))]^+\leq t\}$.
The proof of the correctness of the modified algorithm is almost the same as the proof of Theorem~\ref{thm3}.  It is sufficient to define a feasible schedule~$\pi$ as the one satisfying the precedence constraints and the additional constraint $f(\pi,S_k)\leq t$. Hence, if the algorithm returns a feasible schedule, then it must be optimal. The algorithm fails to compute a feasible schedule when $D_k(t)=\emptyset$  in line~\ref{alg1l3}'. In this case, at least one job in $D\neq\emptyset$ must be completed not earlier than $p(S_k)=\sum_{j\in D} p_j(S_k)$ and $f(\pi,S_k)>t$ for all schedules $\pi\in \Pi$, which means that $\Pi_k(t)=\emptyset$. Clearly, the modified algorithm has the same $O(Kn^2)$ running time.

Note that $\Psi_k$ is a nonincreasing step function on $[\underline{t},\infty)$, i.e.
a constant function on subintervals $[\underline{t}_1,\overline{t}_1)\cup[\underline{t}_2,\overline{t}_2)\cup
\cdots\cup [\underline{t}_l,\infty)$, $\overline{t}_{v-1}=\underline{t}_v$, $v=2,\ldots,l$, $\underline{t}_1=\underline{t}$.
Thus,  $ \alpha \Psi_k(t)+(1-\alpha) t$, $\alpha\in(0,1)$, is a piecewise linear function on $[\underline{t},\infty)$,
a linear increasing function on each subinterval
$[\underline{t}_v,\overline{t}_v)$, $v\in [l]$, and attains minimum at one of the points $\underline{t}_1,\dots, \underline{t}_l$. The functions $\Psi_k(t)$ and $\alpha\Psi_k(t)+(1-\alpha)t$ for $k=3$ are depicted  in the example shown in
Figure~\ref{figex}. We have $\underline{t}_1=18$, $\underline{t}_2=26$,  $\underline{t}_3=60$ and the function $\alpha \Psi_3(t)+(1-\alpha)t$ is minimized for $t=26$.  Since $\pi_2=(1,4,2,5,3)$ is an optimal solution to~$\Psi_3(26)$, we conclude that $\pi_2$ minimizes $H_3(\pi)$. 
\begin{figure}[ht]
	\centering
		\includegraphics{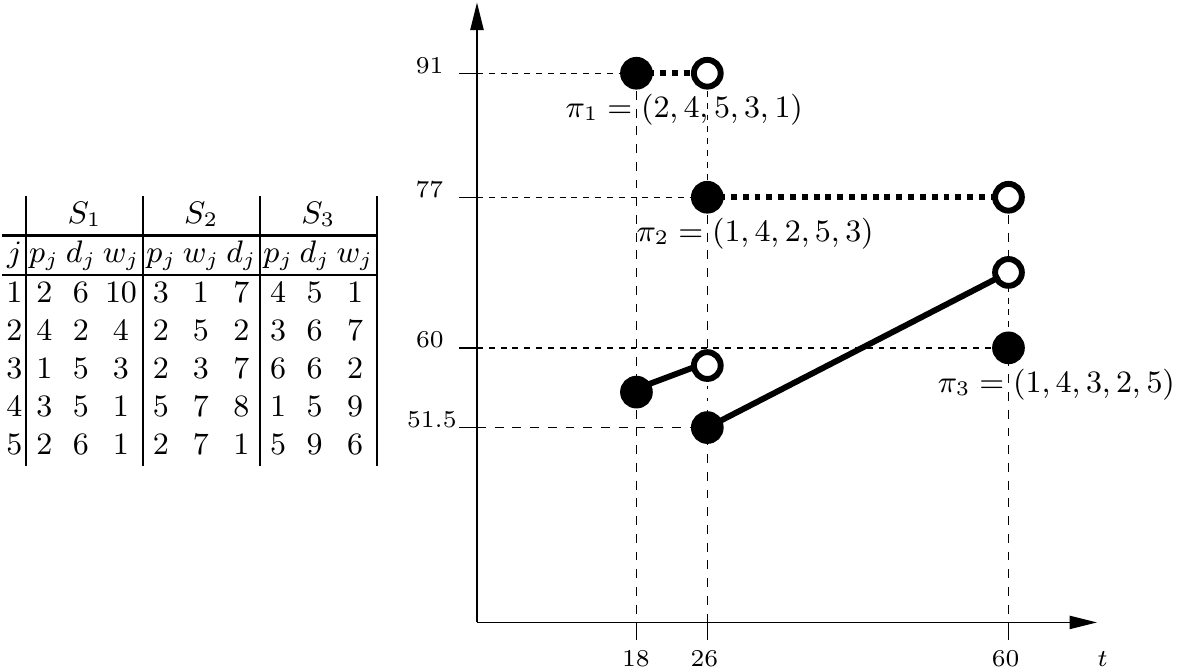}
	\label{figtmax1}
	\caption{The functions $\Psi_3(t)$ (the dotted line) and
	 $0.5\Psi_3(t)+0.5t$, $t\in [18,60]$ (the solid line),  for a sample problem (there are no precedence constraints between the jobs). The function $H_3(\pi)$ is minimized for $\pi_2=(1,4,2,5,3)$ and $H_3(\pi_2)=51.5$.} \label{figex}
\end{figure}

Observe that the value of $t$ minimizing $\alpha \Psi_k(t) + (1-\alpha) t$ can be found in pseudopolynomial time by trying all integers in the interval $[\underline{t},\overline{t}]$.
 We now show how to find the optimal value of $t$ in polynomial time. 
We first compute  $\underline{t}_1=\min_{\pi\in\Pi} f(\pi,S_k)$,  and the value of $\Psi_k(\underline{t}_1)$ by the modified Algorithm~\ref{alg1}.
Let us denote by~$\pi_1$ the resulting optimal schedule, $\pi_1\in \Pi_k(\underline{t}_1)$. In the sample problem shown in Figure~\ref{figex}, $\underline{t}_1=18$, $\pi_1=(2,4,5,3,1)$, and $\Psi_3(\underline{t}_1)=91$. Our goal now is to compute the value of $\underline{t}_2$. 
Choose the iteration of the 
 modified 
Algorithm~\ref{alg1}, in which the position of job $j$ is fixed   in $\pi_1$.  The job  $j$ satisfies the condition stated in line~\ref{alg1l3}'.   We can now compute the smallest value of $t$, $t>\underline{t}_1$,  for which job $j$ violates this condition and must be replaced by some other job in $D_k(t)$. In order to do this it suffices to try all values $t_i=w_i(S_k)[p(S_k)-d_i(S_k)]^+$ for
  $i\in D\setminus\{j\}$ and fix $t^*_j$ as the smallest among them which violates 
  the condition in line~\ref{alg1l3}' (if the condition holds for all $t_i$, then $t^*_j=\infty$). Repeating this procedure for each job we get the set of values $t^*_1,\dots, t^*_n$ and $\underline{t}_2$ is the smallest value among them. Consider again the sample problem presented in Figure~\ref{figex}. When job~1 is placed at position 5 in $\pi_1$, it satisfies the condition  in line~\ref{alg1l3}'  for $t=18$. In fact, it holds $D_3(\underline{t}_1)=\{1\}$. Since $D=\{1,2,3,4,5\}$, we now try the values $t_2=91$, $t_3=26$, $t_4=126$, and $t_5=60$. The condition in line~\ref{alg1l3}' is violated for $t=t_3=26$ as $D_3(26)=\{1,3\}$ and $F_3(D)<F_1(D)$. Hence $t^*_1=26$. In the same way we compute the remaining values $t_2^*,\dots, t^*_5$. It turns out that $t_1^*=26$ is the smallest among them, thus $\underline{t}_2=26$.
  The value of $\underline{t}_3$ can be found in the same way. We compute an optimal
   schedule~$\pi_2$  corresponding to $\Psi_k(\underline{t}_2)$ and repeat the previous procedure. 
  
  Consider the sequence of schedules $\pi_l, \pi_{l-1},\dots,\pi_1$, where $\pi_v$ minimizes $\Psi_k(\underline{t}_v)$.  Schedule $\pi_{v-1}$ can be obtained from $\pi_v$ by moving the position of at least one job in $\pi_v$, say $j$, whose current position becomes infeasible as $t$ decreases, to the left. Furthermore the position of $j$ cannot increase in all the subsequent schedules $\pi_{v-2}, \dots, \pi_1$, because the function $f(\pi,S_k)$ is nondecreasing (if $j$ cannot be placed at $i$th position, then it also cannot be placed at positions $i+1,\dots,n$). Hence, if $\pi_l$ is the last schedule, then the position of job $\pi_l(i)$ can be decreased at most $i-1$ times which implies $l=O(n^2)$. Hence problem~(\ref{defphi}) can be solved in $O(Kn^4)$ time and  \textsc{Min-Hurwicz}~$1|prec|\max w_j T_j$ is solvable in $O(K^2 n^4 )$ time.  
\end{proof}

\subsubsection{The $k$th largest cost criterion}

In this section we investigate the \textsc{Min-Quant($k$)}~$1|prec|\max w_j T_j$ problem. Thus our goal is to minimize the $k$th largest schedule cost. It is clear that this problem is polynomially solvable when $k=1$ or $k=K$. It is, however, strongly NP-hard and not at all approximable when $k$ is a function of $K$, in particular, when the median of the costs is minimized (see Theorem~\ref{thm1}). We now explore the case when $k$ is constant. 
\begin{thm}
\label{thm4a}
		 \textsc{Min-Quant($k$)}~$1|prec|\max w_j T_j$  is solvable in $O\left(\binom{K}{k-1} (K-k+1) n^2 \right)$ time, which is polynomial when $k$ is constant.
\end{thm}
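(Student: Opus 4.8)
The plan is to reduce the problem of minimizing the $k$th largest cost to a polynomial number of instances of the minmax-type problem already solved by (the modified) Algorithm~\ref{alg1}. The key observation is that if $\mathrm{OWA}(\pi)$ is the $k$th largest value among $f(\pi,S_1),\dots,f(\pi,S_K)$, then there is a subset $A\subseteq [K]$ with $|A|=k-1$ (the indices of the $k-1$ scenarios that are ``above'' the $k$th largest one) such that the $k$th largest cost equals $\max_{i\in [K]\setminus A} f(\pi,S_i)$. So for a fixed guess $A$, we would like to minimize $\max_{i\notin A} f(\pi,S_i)$ over $\pi\in\Pi$; call the optimal value $g(A)$. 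Then, taking the best guess, $\min_{\pi}\mathrm{OWA}(\pi)=\min_{|A|=k-1} g(A)$, because for the optimal $\pi^*$ one may take $A$ to be any $k-1$ scenarios whose cost is at least the $k$th largest, giving $g(A)\le\mathrm{OWA}(\pi^*)$; conversely for any $A$ and any $\pi$, the value $\max_{i\notin A}f(\pi,S_i)$ is at least the $k$th largest of all $K$ costs of $\pi$ (dropping $k-1$ of them can lower the maximum past the $k$th largest only if we drop the top $k-1$, and even then the remaining maximum is exactly the $k$th largest; dropping any other set leaves something $\ge$ the $k$th largest), so $g(A)\ge\mathrm{OWA}(\pi)\ge\min_\pi\mathrm{OWA}(\pi)$.

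The next step is to compute $g(A)$ for a fixed $A$. This is precisely a \textsc{Min-Max}~$1|prec|\max w_j T_j$ instance restricted to the $K-|A|=K-k+1$ scenarios in $[K]\setminus A$, so by Theorem~\ref{thm3} (Algorithm~\ref{alg1} run on this subfamily of scenarios) it is solvable in $O((K-k+1)n^2)$ time. Iterating over all $\binom{K}{k-1}$ choices of $A$ and keeping the schedule attaining the minimum $g(A)$ yields total running time $O\!\left(\binom{K}{k-1}(K-k+1)n^2\right)$, which is the claimed bound; when $k$ is a constant, $\binom{K}{k-1}=O(K^{k-1})$ and $K-k+1=O(K)$, so the bound is polynomial in $K$ and $n$.

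There is one subtlety to check carefully, which I expect to be the main (minor) obstacle: for the chosen $A$ the schedule returned by Algorithm~\ref{alg1} minimizes $\max_{i\notin A}f(\pi,S_i)$, but we must verify that the $k$th largest cost of that returned schedule over \emph{all} $K$ scenarios does not exceed $g(A)$ for the specific $A$ that ``witnesses'' the optimum — and more importantly that $\min_{|A|=k-1}g(A)$ really equals the optimal $k$th largest value rather than only bounding it. The forward direction (that some $A$ achieves $g(A)\le\mathrm{OWA}(\pi^*)$) is immediate; the reverse inequality $g(A)\ge\mathrm{OWA}(\pi)$ for \emph{every} $A$ and $\pi$ needs the elementary fact that removing $k-1$ coordinates from a list of $K$ numbers never decreases its maximum below the $k$th largest of the full list. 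Once this is stated and proved (a one-line argument about order statistics), correctness follows and the complexity count is routine.
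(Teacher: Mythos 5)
Your proposal is correct and uses essentially the same argument as the paper: enumerate all $\binom{K}{k-1}$ scenario subsets of size $k-1$, solve \textsc{Min-Max}~$1|prec|\max w_j T_j$ on the complementary scenario set with Algorithm~\ref{alg1}, and return the best of the resulting schedules. The only difference is that you spell out the order-statistics argument (the $k$th largest of the $K$ costs is sandwiched between the two quantities you compare) that the paper dismisses as ``straightforward to verify,'' which is a valid and complete justification.
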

\begin{proof}
		The algorithm works as follows. We enumerate all the subsets of scenarios of size $k-1$. For each such a subset, say $C$, we solve \textsc{Min-Max}~$1|prec|\max w_j T_j$ for the scenario set $\Gamma\setminus C$, using Algorithm~\ref{alg1}, obtaining a schedule $\pi_C$. Among the  schedules computed we return~$\pi_C$ for which the maximum cost over $\Gamma\setminus C$ is minimal. It is straightforward to verify that this schedule must be optimal. The number of subsets which have to be enumerated is $\binom{K}{k-1}$. For each such a subset we solve 
		 \textsc{Min-Max}~$1|prec|\max w_j T_j$ with scenarios set~$\Gamma \setminus C$, which requires $O((K-k+1) n^2)$ time and the theorem follows.
\end{proof}
The algorithm suggested in the proof of Theorem~\ref{thm4a} is efficient when $k$ is close to~1 or close to $K$. When $k$ is a function of $K$, then this running times becomes exponential and may be prohibitive in practice. In Section~\ref{sappal1}, we will use this algorithm to construct an approximation algorithm for the general  \textsc{Min-Owa}~$1|prec|\max w_j T_j$ problem.

\subsubsection{The OWA criterion - the bounded case}
\label{sec1_2}

In Section~\ref{sec1_1}, we have shown that for the unbounded case 
\textsc{Min-Owa}~$1|prec|\max w_j T_j$  is strongly NP-hard and not at all approximable unless P=NP. In this section we investigate the case when $K$ is constant.
Without loss of generality we can assume
that all the parameters are nonnegative integers.
Let $f_{\max}$ be an upper bound on the maximum weighted tardiness of any job under any scenario. 
By Proposition~\ref{prop2} and equality~(\ref{defFtmax}) we can fix $f_{\max}=\max_{j\in J} F_j(J)$.

\begin{thm}
		\textsc{Min-Owa}~$1|prec|\max w_j T_j$ is solvable in $O(f_{\max}^K Kn^2)$ time, which is pseudopolynomial if $K$ is constant.
		\label{thm4}
\end{thm}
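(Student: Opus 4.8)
The plan is to set up a dynamic program over the $K$-dimensional space of achievable cost vectors, building a schedule one job at a time from the end (position $n$ down to position $1$), exactly as in Algorithm~\ref{alg1}, but now tracking for each partial set $D$ of remaining jobs and each vector $\pmb{c}=(c_1,\dots,c_K)$ of "worst-job-so-far" values per scenario whether $D$ can be scheduled in the first $|D|$ positions so that $f(\pi,S_i)\le c_i$ for all $i\in[K]$. First I would recall, via equation~(\ref{defFtmax}) and Proposition~\ref{prop2}, that the cost of a schedule under $S_i$ is $\max_{j} F_j^{(i)}(pred(\pi,j))$ where $F_j^{(i)}(D)=[w_j(S_i)(\sum_{k\in D}p_k(S_i)-d_j(S_i))]^+$, and that $F_j(J)\le f_{\max}$ bounds every coordinate; hence the reachable cost vectors lie in $\{0,1,\dots,f_{\max}\}^K$, giving at most $(f_{\max}+1)^K$ states per "level."

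**The recursion.** I would process the jobs in the order produced by Algorithm~\ref{alg1}: at step $r$ (from $n$ down to $1$) one identifies, among jobs with no successor still in $D$, which job $j$ to place at position $r$. The key structural fact — essentially the exchange argument of Theorem~\ref{thm3}, reused here coordinatewise — is that it suffices to let $j$ range over all sink-jobs of $D$, because placing $j$ last among $D$ incurs cost $F_j^{(i)}(D)$ under $S_i$ for job $j$, and by Proposition~\ref{prop2} moving $j$ later never decreases any other job's cost. Concretely, define $\Phi(D)\subseteq\{0,\dots,f_{\max}\}^K$ as the set of componentwise-minimal cost vectors $\pmb{c}$ such that $D$ admits a feasible ordering into the first $|D|$ slots with scenario-$i$ cost $\le c_i$. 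Then
\[
\Phi(D)=\mathrm{minimal}\Big\{\ \pmb{c}\vee \pmb{c}'\ :\ j \text{ a sink of } D,\ \pmb{c}'\in\Phi(D\setminus\{j\}),\ c_i=F_j^{(i)}(D)\ \Big\},
\]
where $\vee$ is coordinatewise maximum, with base case $\Phi(\emptyset)=\{(0,\dots,0)\}$. Finally the answer is $\min_{\pmb{c}\in\Phi(J)} \mathrm{owa}_{\pmb{v}}(\pmb{c})$; correctness follows because $\mathrm{owa}_{\pmb{v}}$ is monotone (stated in the excerpt), so an optimal schedule's cost vector is dominated by some element of $\Phi(J)$.

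**Complexity.** There is one subtlety: iterating over all subsets $D$ is exponential, so instead I would not index states by $D$ but carry a single frontier set $\mathcal{A}_r\subseteq\{0,\dots,f_{\max}\}^K$ of reachable cost vectors after $r$ positions have been filled, together with — implicitly — the fact that exactly the jobs placed so far have been used. The point is that Algorithm~\ref{alg1}'s greedy choice is deterministic on $D$, but here we must branch over which sink job goes in position $r$; since at each of the $n$ levels we have $\le n$ candidate jobs and $\le (f_{\max}+1)^K$ incoming vectors, and updating one vector costs $O(K)$ (one $\vee$ of two $K$-vectors, after an $O(K)$ evaluation of $F_j^{(i)}(D)$ using running sums $p(S_i)=\sum_{k\in D}p_k(S_i)$ as in Algorithm~\ref{alg1}), each level costs $O(f_{\max}^K\, n\, K)$ and the total is $O(f_{\max}^K\, n^2 K)$.

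**Main obstacle.** The delicate point — and the part I would write out carefully — is the correctness of restricting the branching at each level to the current sink jobs of the *used-so-far* set, i.e. showing that the set of cost vectors reachable this way, after filling all $n$ positions, dominates the cost vector of every feasible schedule. This is precisely where the exchange argument of Theorem~\ref{thm3} is invoked, but now one must check it respects *all* $K$ coordinates simultaneously and is compatible with the precedence constraints (a sink of $D$ can always be legally placed last among $D$). One also has to argue that keeping only componentwise-minimal vectors in each $\mathcal{A}_r$ loses nothing, which is immediate from the monotonicity of $\vee$ and of $\mathrm{owa}_{\pmb v}$. The bounds $0\le F_j^{(i)}(D)\le f_{\max}$ for all $D$ (Proposition~\ref{prop2}) guarantee the state space never exceeds $(f_{\max}+1)^K$, closing the running-time claim.
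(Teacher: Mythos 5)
There is a genuine gap in your complexity argument, and it sits exactly at the point where you discard the set $D$ from the state. Your recursion $\Phi(D)$ is indexed by the set of jobs still to be placed, and both ingredients of a transition depend on $D$ and not just on the accumulated cost vector: the quantity $F_j^{(i)}(D)$ is computed from $p(S_i)=\sum_{k\in D}p_k(S_i)$, and the set of admissible candidates (the sinks of $D$) is determined by $D$. When you replace the family $\{\Phi(D)\}$ by a single frontier $\mathcal{A}_r$ of cost vectors ``after $r$ positions have been filled,'' you merge states arising from different placed-job sets; a vector reached via one suffix cannot legitimately be extended by the sinks and running sums of another, so $\mathcal{A}_r$ is not well defined and the claimed bound of $(f_{\max}+1)^K$ states per level has no justification. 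Kept honest, your state would have to be a pair $(D,\pmb{c})$, and the number of reachable sets $D$ is in general exponential in $n$ (already with no precedence constraints every subset of each size occurs), so the $O(f_{\max}^K n^2 K)$ running time does not follow. Incidentally, the step you single out as the main obstacle (restricting the branching to sinks of $D$) is not an obstacle at all: every feasible schedule places a sink of $D$ last, so no exchange argument is needed there; the real difficulty is the one above.

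The paper's proof sidesteps this by enumerating \emph{target} vectors rather than computing reachable ones: for each $\pmb{t}\in\{0,\dots,f_{\max}\}^K$ it asks whether some feasible schedule satisfies $f(\pi,S_i)\leq t_i$ for all $i\in[K]$, and answers this in $O(Kn^2)$ time by building a modified scenario set with unit weights and due dates $t_i/w_j(S_i)+d_j(S_i)$ and running Algorithm~\ref{alg1}: the answer is yes iff the resulting minmax value is $0$. Among the feasible vectors one returns a schedule for the $\pmb{t}$ minimizing $\mathrm{owa}_{\pmb{v}}(\pmb{t})$; taking $\pmb{t}^*$ to be the cost vector of an optimal schedule and using the monotonicity of OWA shows this is optimal, and the count $f_{\max}^K\cdot O(Kn^2)$ gives the stated bound. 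The crucial point your plan misses is that the threshold problem for a \emph{fixed} vector $\pmb{t}$ can be solved greedily by the exchange argument of Theorem~\ref{thm3} (it is again a single minmax problem), so the exponential-in-$n$ bookkeeping over subsets $D$ never arises; only the $f_{\max}^K$ enumeration over thresholds remains. If you want to rescue your dynamic-programming formulation, you would need an argument that the pairs $(D,\pmb{c})$ can be compressed, which is precisely what the paper's threshold reduction provides and your write-up does not.
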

\begin{proof}
	Let $\pmb{t}=(t_1,\dots,t_K)$ be a vector of nonnegative integers.
	Let $\Pi(\pmb{t})\subseteq \Pi$ be a subset of the set of feasible schedules such that $\pi \in \Pi(\pmb{t})$ if $f(\pi,S_i)\leq t_i$ for all $i\in [K]$, i.e. the maximum weighted tardiness in $\pi$ under $S_i$ does not exceed $t_i$. Consider the following auxiliary problem. Given a vector $\pmb{t}$, check if $\Pi(\pmb{t})$ is not empty and if so, return any schedule $\pi_{\pmb{t}}\in \Pi(\pmb{t})$. We now show that this auxiliary problem can be solved in polynomial time. Given $\pmb{t}$,  we first form scenario set $\Gamma'$ by specifying the following parameters  for each $S_i \in \Gamma$ and $j\in J$:
	\begin{itemize}
	\item $p_j(S_i')=p_j(S_i)$, 
	\item $ \displaystyle d_j(S_i')=\max\{C\geq 0\,:\,w_j(S_i)(C-d_j(S_i))\leq t_i\}=t_i/w_j(S_i)+d_j(S_i)$,
	\item $w_j(S_i')=1$.
	\end{itemize}
	The scenario set $\Gamma'$ can be determined in $O(Kn)$ time.
We solve \textsc{Min-Max}~$1|prec|\max w_j T_j$ with  the scenario set $\Gamma'$ by 
Algorithm~\ref{alg1} obtaining schedule $\pi$. If the maximum cost of $\pi$ over $\Gamma'$ is 0, then $\pi_{\pmb{t}}=\pi$; otherwise $\Pi(\pmb{t})$ is empty. Since 
\textsc{Min-Max}~$1|prec|\max w_j T_j$
is solvable in $O(Kn^2)$ time, the auxiliary problem is solvable in $O(Kn^2)$ time as well.
	We now show that there exists a vector $\pmb{t}^*=(t_1^*,\dots, t^*_K)$, where $t_i^*\in \{0,\dots,f_{\max}\}$, $i\in [K]$, such that each $\pi_{\pmb{t}^*}\in \Pi(\pmb{t}^*)$ minimizes
	 $\mathrm{OWA(\pi)}$. Let $\pi^*$ be an optimal schedule and let $\pmb{t^*}=(t^*_1,\dots,t^*_K)$ be a vector such that $t^*_i=f(\pi^*,S_i)$ for $i\in [K]$. Clearly, $t^*_i\in \{0,\dots,f_{\max}\}$ for each $i\in [K]$ and $\pi^* \in \Pi(\pmb{t^*})$. By the definition of $\pmb{t^*}$, it holds $\mathrm{owa}_{\pmb{v}}(\pmb{t^*})=\mathrm{OWA}(\pi^*)$. For any $\pi\in \Pi(\pmb{t^*})$ it holds $f(\pi,S_i)\leq t^*_i= f(\pi^*,S_i)$, $i\in [K]$. From the monotonicity of OWA we conclude that each $\pi \in \Pi(\pmb{t^*})$ must be optimal.
	The algorithm enumerates all possible vectors $\pmb{t}$ and computes $\pi_{\pmb{t}}\in \Pi(\pmb{t})$ if $\Pi(\pmb{t})$ is nonempty. A schedule~$\pi_{\pmb{t}}$ with the minimum value 
	of $\mathrm{owa}_{\pmb{v}}(\pmb{t})$ is returned. The number of vectors $\pmb{t}$ which must be enumerated is at most $f_{\max}^K$. Hence the problem is solvable in pseudopolynomial time provided that $K$ is constant and the running time of the algorithm is $O(f_{\max}^K Kn^2)$.
\end{proof}

\subsection{Approximation algorithm}
\label{sappal1}

When $K$ is a part of the input, i.e. in the unbounded case, then the exact algorithm proposed in Section~\ref{sec1_2} may be inefficient. Notice, that due to Theorem~\ref{thm1}, no efficient approximation algorithm can exist for 
 \textsc{Min-Owa}~$1|prec|\max w_j T_j$ in this case
 unless P=NP. We now prove the following result, which can be used to obtain an approximate solution in some special cases of the weight distributions in the OWA operator.
\begin{thm}
\label{thm7}
		Suppose that $v_1=\dots =v_{k-1}=0$ and $v_k>0$, $k\in [K]$. 
		Let $\hat{\pi}$ be an optimal solution to the \textsc{Min-Quant}$(k)$~$1|prec|\max w_j T_j$ problem. Then for each $\pi\in \Pi$, it holds ${\rm OWA}(\hat{\pi})\leq (1/v_k){\rm OWA}(\pi)$ and the bound is tight.
\end{thm}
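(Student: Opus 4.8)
The plan is to bound $\mathrm{OWA}(\hat\pi)$ from above by the $k$th largest cost of $\hat\pi$, and to bound the $k$th largest cost of $\hat\pi$ in turn by $\mathrm{OWA}(\pi)$ for an arbitrary $\pi$. Write $g_k(\pi)$ for the $k$th largest value among $f(\pi,S_1),\dots,f(\pi,S_K)$, so that $\hat\pi$ minimizes $g_k$ over $\Pi$ by assumption. First I would observe that since $v_1=\dots=v_{k-1}=0$, the OWA value of any schedule picks up only the $k$th and later (i.e.\ smaller) order statistics; because weights are nonnegative and sum to $1$, and the $i$th largest cost for $i\ge k$ is at most the $k$th largest, we get
$$\mathrm{OWA}(\hat\pi)=\sum_{i\ge k} v_i\, f(\hat\pi,S_{\sigma(i)})\le \Big(\sum_{i\ge k} v_i\Big) g_k(\hat\pi)\le g_k(\hat\pi),$$
and in fact the sharper bound $\mathrm{OWA}(\hat\pi)\le g_k(\hat\pi)$ is all that is needed on this side.

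Next I would fix an arbitrary $\pi\in\Pi$ and bound $g_k(\hat\pi)$. By optimality of $\hat\pi$ for \textsc{Min-Quant}$(k)$ we have $g_k(\hat\pi)\le g_k(\pi)$. Now I claim $v_k\, g_k(\pi)\le \mathrm{OWA}(\pi)$: indeed, among the costs $f(\pi,S_1),\dots,f(\pi,S_K)$ there are at least $k$ of them (the $k$ largest, in the ordering $\sigma$ for $\pi$) that are $\ge g_k(\pi)$, so
$$\mathrm{OWA}(\pi)=\sum_{i\in[K]} v_i\, f(\pi,S_{\sigma(i)})\ge \sum_{i\le k} v_i\, f(\pi,S_{\sigma(i)})\ge v_k\, f(\pi,S_{\sigma(k)})=v_k\, g_k(\pi),$$
using that all $v_i$ and all $f(\pi,S_i)$ are nonnegative and $f(\pi,S_{\sigma(i)})\ge f(\pi,S_{\sigma(k)})=g_k(\pi)$ for $i\le k$. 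Chaining the three inequalities gives $\mathrm{OWA}(\hat\pi)\le g_k(\hat\pi)\le g_k(\pi)\le (1/v_k)\mathrm{OWA}(\pi)$, which is the claimed bound.

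For tightness I would exhibit a family of instances where all three inequalities above are essentially equalities. A clean choice: take weights with $v_k=\epsilon$ concentrated appropriately (e.g.\ $v_k$ small and the remaining mass on $v_{k+1},\dots$), and an instance where the optimal \textsc{Min-Quant}$(k)$ schedule $\hat\pi$ is forced to have its top $k$ costs all equal to the same value $M$ while some competitor $\pi$ has its $k$th largest cost equal to $M$ but all strictly smaller costs equal to $0$; then $\mathrm{OWA}(\pi)=v_k M$ while $\mathrm{OWA}(\hat\pi)=M$ (since the costs below the top $k-1$ zero-weighted ones still equal $M$), giving the ratio $1/v_k$ in the limit. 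The main obstacle I anticipate is not the inequality — which is a short order-statistics argument — but constructing the tightness instance so that $\hat\pi$ is genuinely optimal for \textsc{Min-Quant}$(k)$ while the OWA ratio approaches $1/v_k$; this requires carefully arranging the due-date/weight scenarios (in the style of the reductions used earlier in the paper) so that no schedule can do better than $\hat\pi$ on the $k$th largest cost, and I would verify that a small number of jobs and scenarios suffices.
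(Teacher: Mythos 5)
The inequality half of your argument is correct and is essentially the paper's own proof: the chain $\mathrm{OWA}(\hat{\pi})\leq f(\hat{\pi},S_{\sigma(k)})\leq f(\pi,S_{\rho(k)})\leq (1/v_k)\,\mathrm{OWA}(\pi)$ in the paper is exactly your three steps, with your $g_k(\cdot)$ being the $k$th order statistic written out via the permutations $\sigma$ and $\rho$. Nothing needs to change there.

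The gap is the tightness claim, which is part of the theorem's statement and which you leave as an acknowledged obstacle with only the shape of an instance. To close it you must exhibit an actual $1|prec|\max w_j T_j$ instance --- due dates, weights and processing times under each scenario --- and verify both that $\hat{\pi}$ is genuinely \textsc{Min-Quant}$(k)$-optimal and that the ratio against some feasible $\pi$ equals $1/v_k$; an abstract cost matrix is not enough, since not every such matrix is realizable by a maximum-tardiness instance. The paper settles this with a concrete family for the case $k=1$, $v_i=1/K$ for all $i$: take $K$ scenarios and $2K$ jobs with unit processing times and weights, where the pair $J_{2i-1},J_{2i}$ has due dates $2i-1$ and $2i$ under $S_1,\dots,S_{K-1}$ but both jobs have due date $2i-1$ under $S_K$. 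Every schedule is then at least one unit late under $S_K$, so the ``bad'' schedule $(J_2,J_1,J_4,J_3,\dots,J_{2K},J_{2K-1})$, which has maximum tardiness $1$ under every scenario, is a legitimate \textsc{Min-Max}-optimal output with $\mathrm{OWA}=1$, while $(J_1,J_2,\dots,J_{2K})$ has $\mathrm{OWA}=1/K$, giving the ratio $K=1/v_1$. Your sketch for general $k$ is plausible in outline, but it would need this level of concreteness to count as a proof of tightness.
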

\begin{proof}
Let $\sigma$ be a sequence of $[K]$ such that $f(\hat{\pi},S_{\sigma(1)})\geq \dots \geq f(\hat{\pi},S_{\sigma(K)})$ and $\rho$ be a sequence of $[K]$ such that $f(\pi,S_{\rho(1)})\geq \dots \geq f(\pi,S_{\rho(K)})$. It holds:
$${\rm OWA}(\hat{\pi})=\sum_{i=k}^K v_i f(\hat{\pi},S_{\sigma(i)})\leq f(\hat{\pi},S_{\sigma(k)}).$$
From the definition of $\hat{\pi}$ and the assumption that $v_k>0$ we get
$$f(\hat{\pi},S_{\sigma(k)})\leq f(\pi,S_{\rho(k)})\leq \frac{1}{v_k}\sum_{i \in [K]} v_i f(\pi,S_{\rho(i)})=\frac{1}{v_k}{\rm OWA}(\pi).$$
Hence ${\rm OWA}(\hat{\pi})\leq (1/v_k){\rm OWA}(\pi)$.
To see that the bound is tight consider 
an instance of
the problem with $K$ scenarios and $2K$ jobs. The job processing times and weights are equal to~1 under all scenarios. The job due dates are shown in Table~\ref{tab2a}. We fix $v_i=(1/K)$ for each $i\in [K]$.

\begin{table}[ht]
\centering
\caption{An example of due date scenario set for which the approximation algorithm achieves a ratio of $1/v_k$.} \label{tab2a}
\begin{tabular}{l|ccccccccccc}
   & $S_1$ & $S_2$ & $S_3$ & $\dots$ & $S_K$ \\ \hline
$J_1$ & 1 & 1 & 1 & $\dots$ & 1 \\
$J_2$ & 2 & 2 & 2 & $\dots$ & 1 \\
$J_3$ & 3 & 3 & 3 & $\dots$ & 3\\
$J_4$ & 4 & 4 & 4 & $\dots$ & 3\\
$\vdots$ & $\vdots$ & $\vdots$ & $\vdots$ & $\vdots$ & $\vdots$   \\
$J_{2K-1}$ & $2K-1$ & $2K-1$ & $2K-1$ & $\dots$ & $2K-1$\\
$J_{2K}$ & $2K$ & $2K$ & $2K$ & $\dots$ & $2K-1$
\end{tabular}
\end{table}

Since $v_1>0$, we solve \textsc{Min-Max}~$1|prec|\max w_j T_j$. As a result we can obtain the schedule $\pi=(J_2,J_1,J_4,J_3,\dots,J_{2K},J_{2K-1})$ whose average cost over all scenarios is $1$. But the average cost of the optimal schedule $\pi^*=(J_1,J_2,J_3,J_4,\dots,J_{2K-1},J_{2K})$ is $1/K$.
\end{proof}

We now show several consequences of Theorem~\ref{thm7}. Observe first that if $v_1>0$, then we can use Algorithm~\ref{alg1} to obtain the approximate schedule in polynomial time.
\begin{cor}
If  $v_1>0$, then \textsc{Min-Owa}~$1|prec|\max w_j T_j$ is approximable within $1/v_1$.
\end{cor}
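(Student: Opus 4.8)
The plan is to specialize Theorem~\ref{thm7} to the case $k=1$. When $v_1>0$, the hypothesis of Theorem~\ref{thm7} is satisfied with $k=1$ (there are no indices $j$ with $1\le j\le k-1=0$, so the condition $v_1=\dots=v_{k-1}=0$ is vacuously true, and $v_k=v_1>0$). Thus Theorem~\ref{thm7} applies directly.

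The key observation is that the \textsc{Min-Quant}$(1)$~$1|prec|\max w_j T_j$ problem is exactly the \textsc{Min-Max}~$1|prec|\max w_j T_j$ problem: minimizing the $1$st largest cost over $\Gamma$ is the same as minimizing $\max_{i\in[K]} f(\pi,S_i)$. By Theorem~\ref{thm3}, Algorithm~\ref{alg1} solves \textsc{Min-Max}~$1|prec|\max w_j T_j$ exactly in $O(Kn^2)$ time, so we obtain in polynomial time an optimal solution $\hat{\pi}$ to \textsc{Min-Quant}$(1)$~$1|prec|\max w_j T_j$. Applying Theorem~\ref{thm7} with $k=1$ to this $\hat{\pi}$ yields, for every $\pi\in\Pi$, the inequality ${\rm OWA}(\hat{\pi})\le (1/v_1){\rm OWA}(\pi)$; in particular this holds when $\pi$ is an optimal solution to \textsc{Min-Owa}~$1|prec|\max w_j T_j$. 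Hence $\hat{\pi}$, which is computed in polynomial time, is a $1/v_1$-approximate solution, establishing the corollary.

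There is essentially no obstacle here: the corollary is a one-line deduction from Theorem~\ref{thm7} once one notes that $k=1$ satisfies the theorem's hypothesis and that \textsc{Min-Quant}$(1)$ coincides with \textsc{Min-Max}, which is polynomially solvable by Algorithm~\ref{alg1}. The only point worth stating carefully in the write-up is why the hypothesis of Theorem~\ref{thm7} is met in the degenerate case $k=1$ (the vacuous condition on $v_1,\dots,v_{k-1}$) and why the quantile-$1$ problem is polynomially solvable; both are immediate from the material already developed.
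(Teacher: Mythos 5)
Your proof is correct and matches the paper's own argument: the corollary is obtained by applying Theorem~\ref{thm7} with $k=1$, noting that \textsc{Min-Quant}$(1)$ coincides with \textsc{Min-Max}, which Algorithm~\ref{alg1} solves in polynomial time. No issues.
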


Consider now the case of nondecreasing weights, i.e. $v_1\geq v_2\geq \dots \geq v_K$. Recall that nondecreasing weights are used when the idea of robust optimization is adopted. Namely, larger weights are assigned to larger schedule costs. Since in this case the inequality $v_1\geq 1/K$ must hold, 
we get the following result:
\begin{cor}
If the weights are nonincreasing, then \textsc{Min-Owa}~$1|prec|\max w_j T_j$  is approximable within $1/v_1\leq K$.
\end{cor}
Finally, the following corollary is an immediate consequence of the previous corollary:
\begin{cor}
 \textsc{Min-Average}~$1|prec|\max w_j T_j$ is approximable within~$K$.
\end{cor}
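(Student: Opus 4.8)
The plan is to recognize \textsc{Min-Average}~$1|prec|\max w_j T_j$ as exactly the instance of \textsc{Min-Owa}~$1|prec|\max w_j T_j$ whose weight vector is uniform, that is, $v_j = 1/K$ for all $j \in [K]$. First I would check that such a vector is a (weakly) nonincreasing one: since $v_1 = v_2 = \dots = v_K = 1/K$, the chain $v_1 \ge v_2 \ge \dots \ge v_K$ holds trivially. Therefore the hypothesis of the preceding corollary is satisfied, and that corollary applies directly.

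Next I would substitute $v_1 = 1/K$ into the bound $1/v_1$ supplied by the preceding corollary, obtaining the ratio $1/v_1 = K$. Unwinding this through Theorem~\ref{thm7}, used with $k = 1$ (which is admissible because $v_1 = 1/K > 0$ and the antecedent $v_1 = \dots = v_{k-1} = 0$ is vacuous in that case), the approximate schedule $\hat{\pi}$ is an optimal solution of \textsc{Min-Quant}$(1)$~$1|prec|\max w_j T_j$, which is the same as \textsc{Min-Max}~$1|prec|\max w_j T_j$ and is computed by Algorithm~\ref{alg1} in $O(Kn^2)$ time; it satisfies ${\rm OWA}(\hat{\pi}) \le (1/v_1)\,{\rm OWA}(\pi) = K\,{\rm OWA}(\pi)$ for every $\pi \in \Pi$. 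This is the asserted $K$-approximation, and it runs in polynomial time.

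I do not expect any real obstacle here: the statement is a pure specialization of the previous corollary, and the only point to verify — that equal weights count as nonincreasing — is immediate from the definitions. If desired, I would additionally remark that the tightness example constructed in the proof of Theorem~\ref{thm7} already employs uniform weights $v_i = 1/K$, so the factor $K$ is the best ratio obtainable from this particular reduction to \textsc{Min-Max}~$1|prec|\max w_j T_j$.
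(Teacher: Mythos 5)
Your proposal is correct and follows exactly the paper's route: the paper derives this corollary as an immediate specialization of the preceding one (nonincreasing weights give ratio $1/v_1 \leq K$) by taking the uniform vector $v_j = 1/K$, which is precisely your argument, including the reduction through Theorem~\ref{thm7} with $k=1$ to \textsc{Min-Max}~$1|prec|\max w_j T_j$ solved by Algorithm~\ref{alg1}. Your added remark that the tightness instance in the proof of Theorem~\ref{thm7} already uses uniform weights, so the factor $K$ cannot be improved by this particular reduction, is accurate and consistent with the paper's example.
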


\section{The weighted sum of completion times cost function}
\label{sec3}

Let the cost of schedule~$\pi$ under scenario~$S_i$ be the weighted sum of completion times in $S_i$, i.e.
 $f(\pi,S_i)=\sum_{j\in J} w_j(S_i) C_j(\pi,S_i)$. Using the Graham's notation, the deterministic version of the problem is denoted 
 by $1|prec|\sum w_j C_j$. 
 We will also examine the special cases of this problem with no precedence constraints between the jobs, i.e. $1||\sum w_j C_j$ and all job weights equal to~1, i.e.  $1||\sum C_j$. It is well known that $1|prec|\sum C_j$ is strongly NP-hard for arbitrary precedence constraints~\cite{LK78}. It is, however, polynomially solvable for some special cases of the precedence constraints such as in-tree, out-tree or sp-graph (see, e.g.~\cite{B07}). If there are no precedence constraints between the jobs, then an optimal schedule can be obtained by ordering the jobs with respect to nondecreasing ratios $p_j/w_j$, which reduces to the SPT rule when all job weights are equal to~1. 
 
 In this section, we will show that if the number of scenarios is a part of the input, then \textsc{Min-Owa}~$1||\sum w_j C_j$ is strongly NP-hard and not at all approximable. This is the case when the weights in the OWA criterion are nondecreasing, or OWA is the median. We then propose several approximation algorithms which will be valid for nonincreasing weights and the Hurwicz criterion.
 
\subsection{Hardness of the problem}

 The \textsc{Min-Max}~$1||\sum w_j C_j$ and \textsc{Min-Max}~$1||\sum C_j$ problems have
  been recently investigated in literature, 
 and the following results have been established:
\begin{thm}[\cite{YY02}]
\label{thmc1}
		\textsc{Min-Max}~$~1||\sum C_j$ is NP-hard  even for $K=2$.
\end{thm}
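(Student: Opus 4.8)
The plan is to give a reduction from a classical NP-complete problem --- the natural candidate is \textsc{Partition}, since we only have $K=2$ scenarios to play with and \textsc{Partition} is the standard source of weak NP-hardness with a two-sided structure that matches two scenarios. Given a \textsc{Partition} instance $a_1,\dots,a_n$ with $\sum_j a_j = 2A$, I would create one job $J_j$ per element, and choose the two processing-time scenarios so that the contribution of a job to $\sum_k C_k$ under each scenario is governed by $a_j$ in ``opposite directions''. Concretely one sets $p_j(S_1)$ proportional to $a_j$ and $p_j(S_2)$ proportional to $a_j$ as well but arranged (via an auxiliary large common term, or via a complementary quantity like $M-a_j$) so that a prefix of the schedule that has total ``$a$-weight'' equal to $A$ in scenario $S_1$ automatically has the complementary balance in $S_2$. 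Since $f(\pi,S_i)=\sum_k C_k(\pi,S_i)=\sum_k (n-k+1)\, p_{\pi(k)}(S_i)$, the scenario cost is a positively-weighted sum of the processing times in processing order, so reordering jobs trades cost between the two scenarios; the worst-case value $\max\{f(\pi,S_1),f(\pi,S_2)\}$ is minimized exactly when the two scenario costs are balanced, which happens iff the \textsc{Partition} instance is a yes-instance.

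The key steps, in order, would be: (1) fix the gadget --- define $p_j(S_1)$ and $p_j(S_2)$ from $a_j$, add whatever common offset is needed so that every ordering gives the same ``bulk'' cost and only the $a_j$-dependent part fluctuates; (2) write $f(\pi,S_i)$ explicitly as $\sum_{k=1}^n (n-k+1) p_{\pi(k)}(S_i)$ and observe that $f(\pi,S_1)+f(\pi,S_2)$ is a constant independent of $\pi$ (by the complementary construction), so $\max\{f(\pi,S_1),f(\pi,S_2)\} \ge \tfrac12 (f(\pi,S_1)+f(\pi,S_2))$ with equality iff $f(\pi,S_1)=f(\pi,S_2)$; (3) show that $f(\pi,S_1)=f(\pi,S_2)$ is achievable iff there is a subset of indices summing to $A$ --- here I would exploit that two jobs with equal $a$-value are interchangeable and, more importantly, that swapping adjacent jobs changes each $f(\pi,S_i)$ by an amount proportional to the difference of their $a$-values, so the set of attainable values of $f(\pi,S_1)-f(\pi,S_2)$ is controlled by subset sums of the $a_j$; (4) set the threshold for \textsc{Min-Max}~$1||\sum C_j$ to $\tfrac12(f(\pi,S_1)+f(\pi,S_2))$ and conclude the equivalence.

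The main obstacle I anticipate is step~(3): making the correspondence between ``balanced schedules'' and ``\textsc{Partition} solutions'' exact rather than just ``a partition gives a balanced schedule''. The positional multipliers $(n-k+1)$ are all distinct, so the cost is not simply a function of which jobs land in the first half --- the internal order within a block also matters. I would handle this by choosing the numeric scale carefully (e.g. multiply all $a_j$ by a large factor and add a dominating base processing time so that the ``which jobs come first'' effect dwarfs any within-block reordering effect, or alternatively design the instance so that the relevant quantity is genuinely $\sum_k (n-k+1) p_{\pi(k)}$ and argue via an exchange argument that an optimal schedule orders jobs by nondecreasing $p_j(S_1)/w_j = $ nondecreasing $p_j(S_1)$), reducing the analysis to the multiset of first-half jobs. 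A cleaner alternative, which I would try first, is to add two extra ``big'' jobs (or a single huge job) forced by the cost structure to occupy designated positions, so that the $n$ partition-jobs occupy a contiguous block of positions with \emph{equal} effective multiplier, eliminating the within-block ordering issue entirely and making the reduction transparent. Since the theorem is quoted from~\cite{YY02}, any of these standard devices suffices; the write-up would pick the one giving the shortest correctness argument.
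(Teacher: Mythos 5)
The paper does not prove this statement --- it is imported verbatim from Yang and Yu \cite{YY02} --- so there is no in-paper argument to compare against; I can only assess your sketch on its own terms. Your high-level strategy (reduce from a \textsc{Partition}-type problem, make the two scenario costs complementary so that $f(\pi,S_1)+f(\pi,S_2)$ is schedule-independent, and argue that the min-max value is minimized exactly when the two costs are balanced) is the right one and matches the known proof in spirit. But the crux you flag in step~(3) is a genuine gap, and neither of the two devices you propose to close it works. First, adding a large common base $M$ to every processing time adds the \emph{same} constant $M\,n(n+1)/2$ to $f(\pi,S_i)$ for every schedule $\pi$ and every scenario, since $f(\pi,S_i)=\sum_{k}(n-k+1)p_{\pi(k)}(S_i)$; it therefore changes neither the optimal schedule nor the difference $f(\pi,S_1)-f(\pi,S_2)$, and cannot make one effect ``dwarf'' another. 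Second, there is no way to give a contiguous block of positions ``equal effective multiplier'': the coefficient of position $k$ is $n-k+1$, which is strictly decreasing no matter what big jobs you place elsewhere. So after your reduction the quantity you need to control, $f(\pi,S_1)-f(\pi,S_2)=\sum_k(n-k+1)\bigl(p_{\pi(k)}(S_1)-p_{\pi(k)}(S_2)\bigr)$, is an \emph{assignment} of the $a_j$ to the distinct coefficients $n,n-1,\dots,1$, not a subset sum, and the equivalence with \textsc{Partition} does not follow.

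The missing ingredient in the actual argument of \cite{YY02} (and of Kouvelis--Yu for the regret variant) is a \emph{pairing} construction: for each element one creates \emph{two} jobs whose processing times under the two scenarios are swapped (e.g.\ $(c_j+a_j,\,c_j)$ and $(c_j,\,c_j+a_j)$), with the offsets $c_j$ spread out enough that an SPT-type dominance/exchange argument forces the $j$th pair to occupy the adjacent positions $2j-1$ and $2j$ in some optimal schedule. Adjacency is what saves you: swapping two adjacent jobs changes $\sum C_j$ by exactly the difference of their processing times, independently of the position, so the only remaining freedom --- which member of each pair goes first --- shifts $f(\cdot,S_1)-f(\cdot,S_2)$ by $\pm 2a_j$, and the balanced schedules correspond exactly to solutions of the (even--odd) partition instance. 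Without this pairing-plus-adjacency mechanism, or an equivalent way of flattening the positional coefficients to a two-valued structure, step~(3) of your plan cannot be completed.
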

\begin{thm}[\cite{MNO13}]
\label{thmc2}
		If the number of scenarios is unbounded, then 
		\begin{enumerate}
					\item[(i)] \textsc{Min-max}~$1||\sum w_j C_j$ 
					is strongly NP-hard and not approximable within $O(\log^{1-\varepsilon}n)$ for any $\varepsilon>0$ unless the problems in NP have quasi-polynomial time algorithms.
					\item[(ii)] \textsc{Min-max}~$1||\sum C_j$ and
					 \textsc{Min-max}~$1|p_j=1|\sum w_j C_j$
					 are strongly NP-hard and not approximable within $6/5-\varepsilon$ for any $\varepsilon>0$ unless P=NP.
		\end{enumerate}
\end{thm}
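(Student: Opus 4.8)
The plan is to prove all three claims by polynomial, gap-preserving reductions built on one unifying device: a scenario set that plays two roles simultaneously. A few \emph{structural} scenarios, assembled from jobs that are ruinously long or heavy unless they sit in a prescribed relative position, force every sensible schedule into a small canonical family of permutations --- this substitutes for the precedence constraints, which are absent from the problems in the statement. Then one \emph{test} scenario per clause/edge/element of the source instance is engineered so that the weighted sum of completion times under it equals a fixed background value, common to all canonical schedules, plus a small penalty term that is nonzero exactly when the associated combinatorial constraint is violated by the chosen permutation. Taking the maximum over scenarios turns \textsc{Min-Max}~$1||\cdot$ into the problem of minimising the number (resp.\ weight) of violated constraints, so any inapproximability of the source transfers, and the numbers produced are polynomially bounded, yielding strong NP-hardness along the way.

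For assertion~(ii) I would start from a constant-gap source calibrated to a $5$-versus-$6$ threshold --- an APX-hard satisfiability/covering problem for which it is NP-hard to tell instances of optimum $5t$ from instances of optimum at least $6t$ --- so that ``yes'' instances map to \textsc{Min-Max} value $\le 5t$ and ``no'' instances to value $\ge 6t$; the $6/5-\varepsilon$ bound is then immediate. As in the proof of Theorem~\ref{thm1}, each variable is realised by a pair of jobs whose relative order encodes its truth value. Two flavours of test gadget are needed. For $1|p_j=1|\sum w_j C_j$ all processing times are frozen at~$1$, so $C_j(\pi,S_i)$ is just the position of $j$ in $\pi$; the penalty is carried by the per-scenario weights, chosen so that $\sum_j w_j(S_i)\,\mathrm{pos}_\pi(j)$ is minimised precisely by the canonical orders and overshoots the background by one unit exactly when clause~$i$ is violated. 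For $1||\sum C_j$ all weights equal~$1$, so the penalty must instead be hidden in the per-scenario processing times, and one has to verify that the SPT rule applied inside a scenario to jobs whose lengths carry the gadget still produces exactly the background-plus-penalty value.

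For assertion~(i) the arbitrary job weights allowed in $1||\sum w_j C_j$ let me use a far more expressive source, namely the minimisation form of \textsc{Label Cover} (equivalently \textsc{Min-Rep}), whose $2^{\log^{1-\varepsilon}m}$-inapproximability holds unless the problems in NP admit quasi-polynomial-time algorithms. The idea is to make the \textsc{Min-Max} cost scale like the logarithm of the \textsc{Min-Rep} objective --- for instance through a binary/level gadget in which covering a superedge consistently requires matching $\log$ many ``bits'' --- so that the $2^{\log^{1-\varepsilon}m}$ gap collapses to an $\Omega(\log^{1-\varepsilon}n)$ gap for the scheduling instance of size $n=\mathrm{poly}(m)$, carrying along the same complexity-theoretic caveat. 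Structurally, each supernode becomes a block of jobs whose internal order selects a label, and each superedge contributes a test scenario whose weighted completion-time cost drops to the background only when the selected labels are consistent.

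The step I expect to dominate the difficulty is the one flagged in the first paragraph: making the weighted sum of completion times under each test scenario equal, up to an additive constant independent of which canonical schedule is chosen, precisely the indicator or weight of a violated constraint. In the $T_{\max}$ setting of Section~\ref{sec2} a single tardy job makes a whole scenario's cost jump, but here a misplaced job perturbs a sum only by a bounded amount, so one must ``mass-balance'' the processing times and weights so that permuting the canonical blocks leaves the background cost untouched and only the gadget term reaches the outer maximum. This balancing is tightest for $1||\sum C_j$, where there are no free weights, and checking that the constant-factor gap (and, for~(i), the logarithmic gap) survives the passage from a max-type signal to a summed signal is the real content of the proof.
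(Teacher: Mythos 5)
First, a point of reference: the paper does not prove this theorem at all --- it is quoted verbatim from~\cite{MNO13} and used as a black box, so there is no in-paper proof to compare against. Your attempt therefore has to stand on its own as a reconstruction of the argument in~\cite{MNO13}, and as it stands it does not: it is a plan for a proof rather than a proof. Every step that carries the actual content is deferred. The ``structural'' scenarios that are supposed to force all reasonable schedules into a canonical family are never constructed, and it is not obvious they exist: in a sum-type objective a job placed ``ruinously'' out of position raises the cost of \emph{some} scenario by a large additive amount, but you must show this amount exceeds the entire gap you are trying to preserve in \emph{every} scenario where it matters, while simultaneously keeping the background cost identical across all canonical permutations. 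You correctly identify this mass-balancing as ``the real content of the proof'' and then do not supply it. A proof that names its own missing lemma is still missing that lemma.

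Two more concrete gaps. For assertion~(ii) you posit ``an APX-hard satisfiability/covering problem for which it is NP-hard to tell instances of optimum $5t$ from instances of optimum at least $6t$'' without naming one; the constant $6/5$ is being reverse-engineered from the target rather than derived, and until a specific source problem with exactly that gap (and a gadget whose arithmetic reproduces it) is exhibited, nothing is proved. You also miss a free simplification: by the processing-time/weight inversion described in Section~4.2 of this paper (and in~\cite{MNO13}), \textsc{Min-Max}~$1||\sum C_j$ and \textsc{Min-Max}~$1|p_j=1|\sum w_j C_j$ are cost-preservingly equivalent, so one gadget suffices for both. For assertion~(i), the proposal to make the \textsc{Min-Max} cost ``scale like the logarithm of the \textsc{Min-Rep} objective'' via an unspecified ``binary/level gadget'' is the entire difficulty of that reduction compressed into one sentence: compressing a $2^{\log^{1-\varepsilon}m}$ gap down to a $\log^{1-\varepsilon}n$ gap requires the scheduling cost to respond logarithmically to the number of violated superedges, and no mechanism achieving this is described or verified. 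If you want to make this rigorous, the honest route is to work through the actual construction in~\cite{MNO13} rather than to re-derive it from the shape of the answer.
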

We now show that the general case is much more complex.
\begin{thm} 
\label{thmcc}
	If the number of scenarios is unbounded, then	\textsc{Min-Owa}~$1||\sum w_j C_j$ is strongly NP-hard and not at all approximable
		unless P=NP.
		\end{thm}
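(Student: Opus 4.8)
The plan is to mimic the reduction used in the proof of Theorem~\ref{thm1}(ii), replacing the maximum weighted tardiness cost by the weighted sum of completion times cost and replacing \textsc{Min 3-Sat} by \textsc{Min 2-Sat}, which is already strongly NP-hard (see~\cite{AZ02,KM94,MR96}). As before I would create two jobs $J_{x_i}$ and $J_{\overline{x}_i}$ for each variable $x_i$, all with unit processing times under every scenario, and restrict attention to the $2^n$ ``canonical'' schedules $\Pi'$ in which $J_{x_i}$ and $J_{\overline{x}_i}$ occupy positions $2i-1,2i$ in some order; such a schedule encodes a truth assignment exactly as in Theorem~\ref{thm1}. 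The point of the \textsc{Min-Owa} / \textsc{Min-Median} setup is that it suffices to show it is NP-hard to decide whether there is a schedule $\pi$ with $\mathrm{OWA}(\pi)\le 0$ — i.e.\ with cost $0$ under sufficiently many scenarios — so I only need the weights/due-date-analogue gadget to make $f(\pi,S_k)$ take the value $0$ precisely when clause $C_k$ is \emph{not} satisfied by the assignment coded by $\pi$, and a strictly positive value otherwise.

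The obstacle, and the reason $\sum w_j C_j$ needs a different gadget than $\max w_jT_j$, is that $\sum w_j C_j$ is never zero unless \emph{all} weights are zero; a single late/heavy job no longer ``hides'' behind a $[\,\cdot\,]^+$. So the first real step is to design, for each clause $C_k=(l_1\vee l_2)$, a weight scenario $S_k$ under which the total weighted completion time of a canonical schedule is $0$ iff the coded assignment falsifies $C_k$. The natural device: under $S_k$ give weight $0$ to every job except possibly the ``satisfying'' jobs for the literals of $C_k$ — that is, if $l_q=x_i$ put positive weight on $J_{x_i}$ only, if $l_q=\overline x_i$ put positive weight on $J_{\overline x_i}$ only. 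Then in a canonical schedule the weighted sum under $S_k$ is $0$ exactly when, for each literal $l_q$ of $C_k$, the job carrying the weight is the one scheduled \emph{first} in its pair and has weight-bearing completion time $\dots$ — wait, it is never $0$. The resolution is that we must instead make the \emph{contribution} vanish by putting the weight on the job whose presence-or-absence in the ``satisfied'' state is what we detect: keep the weight-carrying job's completion time fixed but let the weight itself be $0$ when the literal is made true and positive when made false — exactly the construction already used in Theorem~\ref{thm2}. Concretely, I would reuse the Theorem~\ref{thm2} weight scenarios verbatim (weight $1$ on $J_{\overline x_i}$ if $l_q=x_i$, i.e.\ on the job that is late-in-pair precisely when $x_i$ is set to make $l_q$ false), so that $f(\pi,S_k)$ under a canonical $\pi$ equals $0$ iff $C_k$ is satisfied and equals some fixed positive amount (at most a polynomially bounded constant, since processing times are $1$ and at most two jobs carry weight $1$) iff $C_k$ is falsified.

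With that gadget in hand the rest is bookkeeping identical to Theorem~\ref{thm1}(ii). If the \textsc{Min 2-Sat} target $L$ satisfies $L<\lfloor m/2\rfloor$, pad $\Gamma$ with $m-2L$ extra scenarios in which \emph{every} job has weight $0$ (so $f(\pi,\cdot)=0$ on those scenarios for every schedule), giving $K=2m-2L$ scenarios, and set the OWA weights to the median-type distribution $v_{m-L+1}=1$, all other $v_j=0$; symmetrically, if $L>\lfloor m/2\rfloor$, pad with $2L-m$ scenarios on which every job has a huge weight (forcing a large cost there for every schedule), set $K=2L$ and $v_{L+1}=1$. Then $\mathrm{OWA}(\pi)=0$ is achievable iff there is a canonical $\pi$ whose cost is $0$ under at least the required number of scenarios, iff the coded assignment satisfies at most $L$ clauses. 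I still need to rule out non-canonical schedules: if $\pi\notin\Pi'$, take the last mis-placed pair; then on \emph{every} original clause-scenario in which that pair carries weight the contribution is strictly positive, and one checks this forces $f(\pi,S_k)>0$ on enough scenarios that $\mathrm{OWA}(\pi)>0$ — here one may, if needed, further pad the clause set (e.g.\ add for each variable a clause $x_i$ and a clause $\overline x_i$, at the cost of a fixed additive shift of $L$) so that every pair of jobs carries positive weight in enough scenarios to make the mis-placement detectable by the median weight. The main thing to get right is this last point — guaranteeing that \emph{only} canonical schedules can attain $\mathrm{OWA}(\pi)=0$ — together with verifying that the median index $m-L+1$ (resp.\ $L+1$) lands exactly at the boundary between the ``$0$'' scenarios and the ``positive'' scenarios; everything else is a transcription of the earlier proofs. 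Since the reduction produces a zero/nonzero gap it simultaneously yields strong NP-hardness and the impossibility of any finite approximation ratio unless P${}={}$NP.
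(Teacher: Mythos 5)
Your reduction has a genuine gap at its core: the clause gadget you propose cannot produce zero-cost scenarios. You keep all processing times equal to $1$ under every scenario and reuse the weight scenarios of Theorem~\ref{thm2}, under which, for each literal of $C_k$, some job carries weight $1$ in scenario $S_k$. But with unit processing times every job has completion time at least $1$ in every schedule, so $f(\pi,S_k)=\sum_j w_j(S_k)C_j(\pi)\ge 1$ for \emph{every} schedule $\pi$ whenever $C_k$ is nonempty. You notice this yourself mid-argument (``wait, it is never $0$'') but the proposed resolution --- ``let the weight itself be $0$ when the literal is made true'' --- does not work, because in the reduction the weights are fixed by the scenario, not by the schedule: whether a literal is ``made true'' is a property of $\pi$, while $w_j(S_k)$ is part of the instance. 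Consequently no schedule attains $\mathrm{OWA}(\pi)=0$, the zero/nonzero gap disappears, and neither the hardness nor the inapproximability conclusion follows. (A secondary slip: even granting the gadget, you state ``$f(\pi,S_k)=0$ iff $C_k$ is satisfied,'' which is the reverse of what the median/quantile bookkeeping with threshold $L$ requires.)

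The paper's proof closes exactly this hole by making the \emph{processing times} scenario-dependent as well: under $S_k$ the job $J_{l_q}$ corresponding to a literal $l_q$ of $C_k$ gets $(p,w)=(0,1)$, its complementary job gets $(1,0)$, and all other jobs get $(0,0)$. Then the weighted jobs have completion time $0$ precisely when they precede the (at most two) jobs with positive processing time, which happens iff the coded assignment falsifies $C_k$; this makes $f(\pi,S_k)=0$ attainable and tied to non-satisfaction. With that gadget no restriction to ``canonical'' schedules, no unit-processing-time assumption, and no padding of the clause set are needed; the paper simply takes the OWA weights $v_1=\dots=v_L=0$, $v_{L+1}=\dots=v_K=1/(m-L)$ for the main theorem and adds padding scenarios only afterwards to derive the median corollary. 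If you want to salvage your write-up, replace your gadget by this one; the surrounding bookkeeping you describe is then essentially sound, though more elaborate than necessary.
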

\begin{proof}
We show a polynomial time reduction from the \textsc{Min 2-Sat} problem which is known to be strongly NP-hard 
(see the proof of Theorem~\ref{thm1}).
	Given an instance of \textsc{Min 2-Sat}, we construct 
	the corresponding instance of \textsc{Min-Owa}~$1||\sum w_j C_j$  in the following way.
	We associate 
two jobs $J_{x_i}$ and $J_{\overline{x}_i}$ with each variable~$x_i$, 
$i\in [n]$. 
 We then set $K=m$ and form scenario set~$\Gamma$
 in the following way. 
 Scenario $S_k$ corresponds to clause $C_k=(l_1 \vee l_2)$. For $q=1,2$, if $l_q=x_i$, then the processing time of $J_{x_i}$ is $0$, the weight of $J_{x_i}$ is~1,
 the processing time of $J_{\overline{x}_i}$ is $1$, and the weight of $J_{\overline{x}_i}$ is~0;
 if $l_q=\overline{x}_i$, then the  processing time of $J_{x_i}$ is $1$, the weight of $J_{x_i}$ is 0,
 the processing time of $J_{\overline{x}_i}$ is $0$ and the weight of $J_{\overline{x}_i}$ is~1.
 If neither $x_i$ nor $\overline{x}_i$ appears in $C_k$, then both processing times and weights of $J_{x_i}$ and 
$J_{\overline{x}_i}$ are set to 0. 	
	We complete the reduction by fixing $v_1=v_2=\dots =v_L=0$ and $v_{L+1}=\dots v_K=1/(m-L)$.  A sample reduction is presented in Table~\ref{tab2}.
	\begin{table}[ht]
	  \centering
	  \caption{Processing times and weights $(p_j(S_i), w_j(S_i))$ corresponding to 
	the formula $(x_1\vee \overline{x}_2)\wedge (\overline{x}_2 \vee \overline{x}_3) \wedge 
	(\overline{x}_1 \vee \overline{x}_4) \wedge (x_1 \vee x_3) \wedge (x_1 \vee \overline{x}_4)$.} \label{tab2}
			\begin{tabular}{l|lllll}
				                         & $S_1$ & $S_2$ & $S_3$ & $S_4$ & $S_5$ \\ \hline
					$J_{x_1}$ & $(0,1)$  & $(0,0)$ & $(1,0)$ & $(0,1)$ & $(0,1)$\\
					$J_{\overline{x}_1}$  & $(1,0)$ & $(0,0)$ & $(0,1)$ & $(1,0)$ & $(1,0)$ \\  \hline
					$J_{x_2}$ &  $(1,0)$ & $(1,0)$ & $(0,0)$ & $(0,0)$ & $(0,0)$\\
					$J_{\overline{x}_2}$ & $(0,1)$ & $(0,1)$ & (0,0) & $(0,0)$ & $(0,0)$ \\  \hline
					$J_{x_3}$ & (0,0) & $(1,0)$ & (0,0) & $(0,1)$ & $(0,0)$\\
					$J_{\overline{x}_3}$ & (0,0) & $(0,1)$ & $(0,0)$ & $(1,0)$ & $(0,0)$ \\  \hline
					$J_{x_4}$ &  (0,0) & $(0,0)$ & $(1,0)$ & $(0,0)$ & $(1,0)$\\
					$J_{\overline{x}_4}$ & (0,0) & $(0,0)$ & $(0,1)$ & $(0,0)$ & $(0,1)$ \\  \hline
			\end{tabular}
	\end{table}

	 We now show that there is an assignment to the variables which satisfies at most~$L$ clauses 
	 if and only if there is a schedule~$\pi$ such that $\mathrm{OWA}(\pi)=0$.
	  Assume that there is an assignment  $x_i$, $i\in [n]$, that satisfies at most~$L$ clauses. 
	  According to this assignment we build a schedule~$\pi$ as follows. 
	  We first process $n$ jobs $J_{z_i}$, $z_i\in\{x_i, \overline{x}_i\}$, 
	   which correspond to false literals~$z_i$, $i\in[n]$, in any order and 
	   then the rest $n$ jobs that
	    correspond to true literals~$z_i$, $i\in[n]$,
	    in any order.
	     Choose a clause $C_k=(l_1 \vee l_2)$ which is not satisfied. It is easy to check that the cost of 
	     the schedule~$\pi$ under scenario $S_k$ is~0. 
	     Consequently, there are at most~$L$ scenarios under which the cost of~$\pi$ is 
	     positive and, according to the definition of the weights in the OWA operator,
	      we get $\mathrm{OWA}(\pi)=0$.
	       Suppose now that there is a schedule~$\pi$ such that
	        $\mathrm{OWA}(\pi)=0$.
	         We construct an assignment to the variables by setting  $x_i=0$ if $J_{x_i}$ appears before 
	         $J_{\overline{x}_i}$ in $\pi$ and $x_i=1$ otherwise. Since $\mathrm{OWA}(\pi)=0$,
	          the cost of~$\pi$ must be~0 under at least $m-L$ scenarios. If the cost of $\pi$ is~0 under
	           scenario~$S_k$ corresponding to the clause $C_k$, then the assignment does not satisfy $C_k$. Hence, 
	           there is at least $m-L$ clauses that are not satisfied and, consequently, at most $L$ satisfiable clauses. 	
\end{proof}
\begin{cor}
		\textsc{Min-Median}~$1||\sum w_j C_j$ is strongly NP-hard and not at all approximable unless P=NP. 
\end{cor}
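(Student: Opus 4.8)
The plan is to recycle the reduction from \textsc{Min 2-Sat} built in the proof of Theorem~\ref{thmcc}, modifying only the scenario set and the weight vector, exactly as the padding argument was used for assertion~(ii) of Theorem~\ref{thm1}. Recall that in that construction the answer to \textsc{Min 2-Sat} is ``yes'' if and only if there is a schedule whose cost $f(\pi,S_k)$ is positive under at most $L$ of the $m$ scenarios, and the sole purpose of the OWA weights there was to rewrite the predicate ``at most $L$ scenarios have positive cost'' as ``$\mathrm{OWA}(\pi)=0$''. For the median weight vector $v_{\lfloor K/2\rfloor+1}=1$ and $v_j=0$ otherwise, we must instead choose the total number of scenarios $K$ so that the index $\lfloor K/2\rfloor+1$ coincides with the threshold $L+1$.

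First I would pad the scenario set with dummy scenarios of two trivial types, both easily realized in the $1||\sum w_j C_j$ model. A \emph{null} scenario has every processing time and every weight equal to $0$, so that $f(\pi,S)=0$ for every $\pi\in\Pi$; a \emph{heavy} scenario has one distinguished job with processing time and weight equal to $1$ and all other parameters equal to $0$, so that $f(\pi,S)=1$ for every $\pi\in\Pi$. If $L<\lfloor m/2\rfloor$, I would add $m-2L$ heavy scenarios, giving $K=2m-2L$; since every heavy scenario always has positive cost, $\mathrm{OWA}(\pi)=0$ then becomes equivalent to having at most $\lfloor K/2\rfloor-(m-2L)=L$ of the original scenarios with positive cost. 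If $L>\lfloor m/2\rfloor$, I would instead add $2L-m$ null scenarios, giving $K=2L$; since the null scenarios never affect the cost, $\mathrm{OWA}(\pi)=0$ is again equivalent to having at most $L$ of the original scenarios with positive cost. The remaining case $L=\lfloor m/2\rfloor$, and any off by one caused by the parity of $m$, is absorbed by inserting one further dummy scenario of the appropriate type. In every case, some schedule satisfies $\mathrm{OWA}(\pi)=0$ if and only if \textsc{Min 2-Sat} answers ``yes''.

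Finally I would extract the two claims. Strong NP-hardness follows because \textsc{Min 2-Sat} is strongly NP-hard, the reduction is polynomial, and all the numbers produced --- processing times, weights, and the number $K$ of scenarios --- are bounded by a polynomial in $m$. Non-approximability follows because all costs are nonnegative integers, so the optimal value of the constructed instance is $0$ on ``yes'' instances and at least $1$ on ``no'' instances; hence an approximation algorithm with any finite performance ratio would be forced to return a schedule with $\mathrm{OWA}(\pi)=0$ exactly on the ``yes'' instances, thereby deciding \textsc{Min 2-Sat} and implying P=NP. The only delicate part of the whole argument is the bookkeeping with the floor function when choosing the number of dummy scenarios so that the median index $\lfloor K/2\rfloor+1$ lands precisely on $L+1$; everything else is a verbatim transcription of the proofs of Theorems~\ref{thmcc} and~\ref{thm1}.
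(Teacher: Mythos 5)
Your proposal is correct and follows exactly the route the paper intends: it pads the instance from Theorem~\ref{thmcc} with always-zero-cost or always-positive-cost dummy scenarios so that the median index $\lfloor K/2\rfloor+1$ lands on the threshold $L+1$, mirroring the padding in the proof of Theorem~\ref{thm1}(ii). The paper's own proof is only a one-line sketch of this, so your write-up is simply a fully detailed version of the same argument (and, as a minor remark, the parity of $m$ causes no difficulty since $K$ is even in both padding cases).
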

\begin{proof}
	The proof is similar to the proof of Theorem~\ref{thm1} and consists in adding some additional scenarios to 
	an instance of problem constructed in Theorem~\ref{thmcc}.
\end{proof}

\subsection{Approximation algorithms}

In this section we show several approximation algorithms for \textsc{Min-Owa}~$1|prec|\sum w_j C_j$.
We will explore the case in which the weights in the OWA criterion are
 nonincreasing, i.e. $v_1\geq v_2 \geq \dots \geq v_K$. We will then apply the obtained results to the Hurwicz criterion. Observe, that the case with nondecreasing weights, i.e. $v_1\leq v_2\leq \dots\leq v_K$, is not at all approximable (see the proof of Theorem~\ref{thmcc}).
 We first recall  a well known property
 (see, e.g.~\cite{MNO13}) which states
  that each problem with uncertain processing times and deterministic weights can be transformed into an equivalent problem with uncertain weights and deterministic processing times (and vice versa). This transformation is cost preserving and works as follows. Under each scenario $S_i$, $i\in [K]$, we invert the role of processing times and weights obtaining scenario $S'_i$. The new scenario set $\Gamma'$ contains scenario $S'_i$ for each $i\in [K]$. We also invert the precedence constraints, i.e. if $i\rightarrow j$ in the original problem, then $j\rightarrow i$ in the new one. It can be easily shown that the cost of schedule $\pi$ under $S$ is equal to the cost of the inverted schedule $\pi'=(\pi(n),\dots,\pi(1))$ under $S'$. Consequently $\mathrm{OWA}(\pi)$ under $\Gamma$ equals $\mathrm{OWA}(\pi')$ under $\Gamma'$. 
  Notice that if the processing times are deterministic in the original problem, then the weights become deterministic in the new one (and vice versa).
  
Let $w_{\max}, w_{\min}, p_{\max}, p_{\min}$ be the largest (smallest) weight (processing time) in the input instance. We first consider the case then both processing times an weights can be uncertain. We prove the following result:
\begin{thm}
	If $v_1\geq v_2\geq \dots \geq v_K$ and the deterministic $1|prec|\sum w_j C_j$ problem is polynomially solvable,  then \textsc{Min-Owa}~$1|prec|\sum w_j C_j$ is approximable within $K\cdot\min\{\frac{w_{\max}}{w_{\min}},\frac{p_{\max}}{p_{\min}}\}$.
	\end{thm}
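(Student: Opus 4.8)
The plan is to reduce the OWA objective to the much simpler maximum objective by exploiting two facts: that nonincreasing OWA weights make the OWA value lie between the average and the maximum, and that the deterministic problem being polynomially solvable lets us optimize the \emph{sum} of costs over scenarios exactly. First I would recall that for nonincreasing weights $v_1\ge\dots\ge v_K$ with $\sum_i v_i=1$ one has, for any schedule $\pi$,
\[
\frac{1}{K}\sum_{i\in[K]} f(\pi,S_i)\ \le\ \mathrm{OWA}(\pi)\ \le\ \max_{i\in[K]} f(\pi,S_i),
\]
the left inequality because the nonincreasing weight vector is majorized by the uniform vector (equivalently, $\mathrm{owa}_{\pmb v}$ applied to a sorted vector dominates the average term by term in the partial-sum sense), and the right inequality being the general convex-combination bound already noted in the excerpt.

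Next I would build the approximate schedule $\pi^{\mathrm{ALG}}$ by minimizing $\sum_{i\in[K]} f(\pi,S_i)$ over $\pi\in\Pi$. Since $f(\pi,S_i)=\sum_{j\in J} w_j(S_i)C_j(\pi,S_i)$, the sum $\sum_i f(\pi,S_i)=\sum_{j\in J}\big(\sum_i w_j(S_i)C_j(\pi,S_i)\big)$; if all processing times are deterministic this is again a $1|prec|\sum \tilde w_j C_j$ instance with aggregated weights $\tilde w_j=\sum_i w_j(S_i)$, solvable in polynomial time by assumption, and if only weights are deterministic we first apply the cost-preserving processing-time/weight inversion recalled just before the theorem to land in that case. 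Either way $\pi^{\mathrm{ALG}}$ is computable in polynomial time and satisfies $\sum_i f(\pi^{\mathrm{ALG}},S_i)\le \sum_i f(\pi^*,S_i)$ for the OWA-optimal $\pi^*$.

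Then I would chain the estimates. Let $\mathrm{OPT}=\mathrm{OWA}(\pi^*)$. Using the right-hand bound above and the fact that a maximum is at most the full sum,
\[
\mathrm{OWA}(\pi^{\mathrm{ALG}})\ \le\ \max_{i\in[K]} f(\pi^{\mathrm{ALG}},S_i)\ \le\ \sum_{i\in[K]} f(\pi^{\mathrm{ALG}},S_i)\ \le\ \sum_{i\in[K]} f(\pi^*,S_i)\ \le\ K\max_{i\in[K]} f(\pi^*,S_i).
\]
This gives a ratio of $K\cdot\frac{\max_i f(\pi^*,S_i)}{\mathrm{OWA}(\pi^*)}$, and the remaining task is to bound $\max_i f(\pi^*,S_i)$ by $\min\{w_{\max}/w_{\min},\,p_{\max}/p_{\min}\}\cdot\mathrm{OWA}(\pi^*)$. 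Here I would compare $f(\pi^*,S_i)$ with $f(\pi^*,S_{i'})$ for two scenarios on the \emph{same} schedule: the completion-time sums $\sum_j w_j(S)C_j(\pi^*,S)$ change between scenarios only through the weights and processing times, and a term-by-term comparison shows $f(\pi^*,S_i)\le \frac{w_{\max}}{w_{\min}}\,f(\pi^*,S_{i'})$ (replace each weight by $w_{\max}$ on top and by $w_{\min}$ below, the $C_j$ factors being positive), and symmetrically with the processing-time ratio after inversion; since $\mathrm{OWA}(\pi^*)\ge \min_i f(\pi^*,S_i)$, taking $S_{i'}$ to be the minimizing scenario yields $\max_i f(\pi^*,S_i)\le \min\{w_{\max}/w_{\min},p_{\max}/p_{\min}\}\,\mathrm{OWA}(\pi^*)$.

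The main obstacle I anticipate is precisely this last scenario-to-scenario comparison: one must be careful that the completion times $C_j(\pi^*,S_i)$ themselves vary with the processing-time scenario, so the clean bound $f(\pi^*,S_i)\le (w_{\max}/w_{\min}) f(\pi^*,S_{i'})$ is transparent only when processing times are deterministic, and one must invoke the inversion trick to get the $p_{\max}/p_{\min}$ branch; combining the two into a single $\min$ while keeping the reduction to the polynomially-solvable deterministic problem valid in both regimes is the delicate bookkeeping. Everything else — the majorization inequality for nonincreasing weights and the aggregation of weights into one deterministic instance — is routine.
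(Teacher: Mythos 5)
There is a genuine gap: the theorem is stated for the case in which \emph{both} the processing times and the weights may vary across scenarios (the next theorem in the paper, with ratio~$2$, is the one reserved for the case where one of the two is deterministic). Your algorithm minimizes $\sum_{i\in[K]}f(\pi,S_i)$, and you yourself note that this collapses to a single deterministic $1|prec|\sum \tilde{w}_jC_j$ instance only when the processing times are deterministic (or, after the inversion, only when the weights are). When both vary, $\sum_{i}f(\pi,S_i)=\sum_{j}\sum_{k\in pred(\pi,j)}\bigl(\sum_i w_j(S_i)p_k(S_i)\bigr)$ is a sequencing problem with general pairwise costs, not an instance of $1|prec|\sum w_jC_j$, so your surrogate cannot be optimized in polynomial time using the hypothesis of the theorem. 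The same restriction breaks your last step: comparing $f(\pi^*,S_i)$ with $f(\pi^*,S_{i'})$ term by term costs a factor $\frac{w_{\max}}{w_{\min}}$ \emph{and} a factor $\frac{p_{\max}}{p_{\min}}$ when both parameters change between scenarios, which yields their product rather than their minimum.

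The paper avoids this with a decoupled surrogate: $\hat{f}(\pi)=\sum_j \hat{w}_j\hat{C}_j(\pi)$, where $\hat{w}_j={\rm owa}_{\pmb{v}}(w_j(S_1),\dots,w_j(S_K))$ and $\hat{C}_j(\pi)=\sum_i C_j(\pi,S_i)$; this is a genuine deterministic instance with processing times $\hat{p}_j=\sum_i p_j(S_i)$ and weights $\hat{w}_j$, well defined no matter which parameters are uncertain. One then shows ${\rm OWA}(\hat{\pi})\le\hat{f}(\hat{\pi})$ (using $C_j(\hat{\pi},S_i)\le\hat{C}_j(\hat{\pi})$ together with the fact that, for nonincreasing $\pmb{v}$, the OWA of the weights dominates any pairing $\sum_i v_iw_j(S_{\sigma(i)})$), and $\hat{f}(\hat{\pi})\le \hat{f}(\pi)\le\frac{w_{\max}}{w_{\min}}\sum_i f(\pi,S_i)\le K\frac{w_{\max}}{w_{\min}}{\rm OWA}(\pi)$, the last inequality being exactly your Chebyshev-type bound; the $p_{\max}/p_{\min}$ branch follows by the inversion. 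For the record, in the special case you actually cover (one parameter deterministic) your own chain gives the cleaner ratio $K$ with no $w_{\max}/w_{\min}$ factor at all, since $\sum_i f(\pi^*,S_i)\le K\,{\rm OWA}(\pi^*)$ follows directly from your first displayed inequality without any detour through $\max_i f(\pi^*,S_i)$ --- but that special case is not what the theorem asserts.
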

	\begin{proof}
	Let
	$\hat{p}_j=\sum_{i \in [K]} p_j(S_i)$, $\hat{w}_j={\rm owa}_{\pmb{v}}(w_j(S_1),\dots,w_j(S_K))$, $\hat{C}_j(\pi)=\sum_{i\in [K]} C_j(\pi,S_i)$, and   $\hat{f}(\pi)=\sum_{j\in J} \hat{w}_j\hat{C}_j(\pi)$. Let $\hat{\pi}\in \Pi$ minimize $\hat{f}(\pi)$. Of course, $\hat{\pi}$ can be computed in polynomial time provided that the deterministic counterpart of the problem is polynomially solvable. 
	Let $\sigma$ be a sequence of $[K]$ such that $f(\hat{\pi},S_{\sigma(1)})\geq \dots\geq f(\hat{\pi},S_{\sigma(K)})$.
	It holds	
	\begin{equation}
	\label{cappr0}
	\begin{array}{ll}
	{\rm OWA}(\hat{\pi})=  & \displaystyle \sum_{i\in [K]} v_i\sum_{j\in J} w_j(S_{\sigma(i)})C_j(\hat{\pi},S_{\sigma(i)})\leq\sum_{j\in J} \sum_{i\in [K]} v_iw_j(S_{\sigma(i)})\hat{C}_j(\hat{\pi})= \\
	& \displaystyle=\sum_{j\in J}\hat{C}_j(\hat{\pi}) \sum_{i\in [K]} v_iw_j(S_{\sigma(i)})\leq \sum_{j \in J} \hat{w}_j \hat{C}_j(\hat{\pi})=\hat{f}(\hat{\pi}),
	\end{array}
	\end{equation}
where the inequality $\hat{w}_j\geq \sum_{i\in [K]} v_i w_j(S_{\sigma(i)})$ follows from the assumption that $v_1\geq v_2\geq \dots \geq v_K$.	We also get for any $\pi \in \Pi$
	\begin{equation}
	\label{cappr1}
	\begin{array}{ll}
	\displaystyle \hat{f}(\hat{\pi})\leq &  \displaystyle\hat{f}(\pi)=\sum_{j \in J} \hat{w}_j \hat{C}_j(\pi)=\sum_{j \in J} \hat{w}_j \sum_{i\in [K]} C_j(\pi,S_i)\leq \frac{w_{\max}}{w_{\min}}\sum_{j\in J}\sum_{i\in[K]}w_j(S_i)C_j(\pi,S_i)= \\
	& \displaystyle=\frac{w_{\max}}{w_{\min}}\sum_{i\in[K]}\sum_{j\in J}w_j(S_i)C_j(\pi,S_i),
	\end{array}
	\end{equation}
	where the second inequality follows from the fact that $\hat{w_j}\leq w_{\max}\leq (w_{\max}/w_{\min})w_j(S_i)$ for each $j\in J$, $i\in [K]$.
	Again, from the assumption that  $v_1\geq v_2\geq \dots \geq v_K$ we have
	\begin{equation}
	\label{cappr2}
	(1/K)\sum_{i\in[K]}\sum_{j\in J}w_j(S_i)C_j(\pi,S_i)\leq {\rm OWA}(\pi).
	\end{equation}
	From~(\ref{cappr0}), (\ref{cappr1}) and (\ref{cappr2}) we get
	${\rm OWA}(\hat{\pi})\leq K\cdot \frac{w_{\max}}{w_{\min}}{\rm OWA}(\pi).$
	Since the role of job processing times and weights can be inverted we also get ${\rm OWA}(\hat{\pi})\leq K\cdot \frac{p_{\max}}{p_{\min}}{\rm OWA}(\pi)$ and the theorem follows.
	\end{proof}

In~\cite{MNO13} a 2-approximation algorithm for  \textsc{Min-Max}~$1|prec|\sum w_j C_j$  has been recently proposed, provided that either job processing times or job weights are deterministic
(they do not vary among scenarios). In this section we will show that  this algorithm can be extended to \textsc{Min-Owa}~$1|prec|\sum w_jC_j$ under the additional assumption that the weights in the OWA operator are nonincreasing, i.e. $v_1\geq v_2\geq \dots \geq v_K$.
The idea of the approximation algorithm is to design a mixed integer programming formulation for the problem, solve its linear relaxation and construct an approximate schedule based on the optimal solution to this relaxation.

Assume now that job processing times are deterministic and equal to $p_j$ under each scenario $S_i$, $i\in [K]$.  Let $\delta_{ij}\in \{0,1\}$, $i,j\in [n]$, be binary variables such that $\delta_{ij}=1$ if job $i$ is processed before job $j$ in
 a schedule constructed. The vectors of all feasible job completion times $(C_1,\dots, C_n)$ can be described by the following system of constraints~\cite{POT80}:
	\begin{equation}
	\label{cCC2}
		 \begin {array}{llll}
				 VC: & C_j=p_j+\sum_{i\in J\setminus\{j\}} \delta_{ij} p_i & j\in J\\
				 &\delta_{ij}+\delta_{ji}=1 & i,j\in J, i\neq j \\
				 &\delta_{ij}+\delta_{jk}+\delta_{ki} \geq 1 & i,j,k \in J\\
				 &\delta_{ij}=1 & i\rightarrow j\\
				 &\delta_{ij}\in \{0,1\}& i,j \in J
		\end{array}
\end{equation}
Let us denote by $VC'$ the relaxation of $VC$, in which the constraints $\delta_{ij}\in \{0,1\}$ are replaced with $0\leq \delta_{ij}\leq 1$. 	It has been proved in~\cite{SH96a, SH96b} (see also~\cite{HA97}) that each vector $(C_1,\dots, C_n)$ that satisfies $VC'$ also satisfies the following inequalities:
	\begin{equation}
		\label{Schin}
			\sum_{j\in I} p_jC_j\geq \frac{1}{2}\left((\sum_{j\in I} p_j)^2+\sum_{j\in I} p_j^2\right) \text{ for all } I \subseteq J
	\end{equation}

In order to build a MIP formulation for the problem, we will use the idea of a deviation model introduced in~\cite{OS03}.
Let $\sigma$ be a permutation of $[K]$ such that $f(\pi,S_{\sigma(1)})\geq \dots \geq f(\pi,S_{\sigma(K)})$ and
 let $\theta_k(\pi)=\sum_{i=1}^k f(\pi,S_{\sigma(i)})$ be the cumulative cost of schedule $\pi$. Define $v'_i=v_i-v_{i+1}$ for $i=1,\dots,K-1$ and $v'_K=v_K$. An easy verification shows that
\begin{equation}
\label{mipc00}
		{\rm OWA}(\pi)=\sum_{k=1}^K v'_k \theta_k(\pi).
\end{equation}

\begin{lem}
 	Given $\pi$, the value of $\theta_k(\pi)$ can be obtained by solving the following linear programming problem:
	\begin{equation}
	\label{mipc0}
		\begin{array}{lllll}
				\min & \sum_{i=1}^K  u_{i} - (K-k) r \\
				& r\leq u_{i} & i\in [K] \\
				& u_{i} \geq f(\pi,S_i) & i\in[K] \\
				& u_i \geq 0 & i\in [K]\\
				& r\geq 0
		\end{array}
 	\end{equation}
\end{lem}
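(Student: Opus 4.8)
The plan is to show that the linear program in \eqref{mipc0} computes exactly $\theta_k(\pi)=\sum_{i=1}^k f(\pi,S_{\sigma(i)})$, the sum of the $k$ largest values among $f(\pi,S_1),\dots,f(\pi,S_K)$. This is a classical LP representation of the "sum of $k$ largest" function, so the natural approach is to exhibit a feasible solution achieving the target value and then argue optimality by a lower-bound argument on all feasible solutions. First I would fix $\pi$ and abbreviate $f_i=f(\pi,S_i)$, with $f_{\sigma(1)}\geq\dots\geq f_{\sigma(K)}$ the sorted values.

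For the upper bound, I would propose the explicit feasible point $r=f_{\sigma(k)}$ and $u_i=\max\{f_i,f_{\sigma(k)}\}=[f_i-f_{\sigma(k)}]^+ + f_{\sigma(k)}$ for $i\in[K]$. One checks immediately that $r\leq u_i$, $u_i\geq f_i$, $u_i\geq 0$ (all $f_i\geq 0$ here), and $r\geq 0$, so this is feasible. Its objective value is $\sum_{i=1}^K u_i - (K-k)r = \sum_{i=1}^K \max\{f_i,f_{\sigma(k)}\} - (K-k)f_{\sigma(k)}$. Splitting the sum over the $k$ indices $\sigma(1),\dots,\sigma(k)$ (where $f_{\sigma(j)}\geq f_{\sigma(k)}$, so the max equals $f_{\sigma(j)}$) and the remaining $K-k$ indices (where the max equals $f_{\sigma(k)}$) gives $\sum_{j=1}^k f_{\sigma(j)} + (K-k)f_{\sigma(k)} - (K-k)f_{\sigma(k)} = \theta_k(\pi)$. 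Hence the optimum is at most $\theta_k(\pi)$.

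For the lower bound, take any feasible $(u,r)$. For each index $i$ we have $u_i\geq \max\{f_i,r\}\geq f_i$ and $u_i\geq r$, hence $u_i - r \geq [f_i-r]^+ \geq 0$. Therefore
\begin{equation*}
\sum_{i=1}^K u_i - (K-k)r = \sum_{i=1}^K (u_i-r) + k r \geq \sum_{i=1}^K [f_i-r]^+ + kr.
\end{equation*}
Now I would invoke the elementary fact that for every $r\geq 0$ (indeed every real $r$), $\sum_{i=1}^K [f_i-r]^+ + kr \geq \sum_{j=1}^k f_{\sigma(j)}$: writing $g(r)=\sum_{i=1}^K [f_i-r]^+ + kr$, this is a convex piecewise-linear function of $r$ whose minimum over $r\in\mathbb{R}$ is attained at $r=f_{\sigma(k)}$ with value $\theta_k(\pi)$ (for $r$ slightly below $f_{\sigma(k)}$ the slope is $k-|\{i:f_i>r\}|\le 0$, and for $r$ slightly above it the slope is $k-|\{i:f_i>r\}|\ge 0$). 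This establishes that every feasible objective value is at least $\theta_k(\pi)$, so together with the upper bound the optimum equals $\theta_k(\pi)$.

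The only mildly delicate point — and the step I would treat most carefully — is the minimization of $g(r)$, i.e.\ verifying that $r=f_{\sigma(k)}$ is a global minimizer and that $g(f_{\sigma(k)})=\theta_k(\pi)$; everything else is routine substitution. This can be done cleanly either by the subgradient/slope computation sketched above or, alternatively, by simply plugging $r=f_{\sigma(k)}$ into $g$ and noting $g$ is convex with a nonnegative right-derivative and nonpositive left-derivative there. I would present the convexity argument, since it also explains transparently why the LP in \eqref{mipc0} is the right relaxation to combine with the constraints $VC'$ and inequalities \eqref{Schin} in the subsequent approximation algorithm.
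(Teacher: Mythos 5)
Your proof is correct, but it takes a genuinely different route from the paper. The paper introduces the auxiliary maximization LP
\begin{equation*}
\max \Bigl\{ \sum_{i=1}^K \beta_i f(\pi,S_i) \;:\; \alpha_i+\beta_i\leq 1,\ \sum_{i=1}^K \alpha_i\geq K-k,\ \alpha_i,\beta_i\geq 0 \Bigr\},
\end{equation*}
observes that its optimum is attained by setting $\beta_{\sigma(i)}=1$ for the $k$ largest costs (hence equals $\theta_k(\pi)$), and then concludes by noting that \eqref{mipc0} is the dual of this program, so strong LP duality transfers the value. You instead argue directly on \eqref{mipc0}: you exhibit the feasible point $r=f_{\sigma(k)}$, $u_i=\max\{f_i,f_{\sigma(k)}\}$ achieving $\theta_k(\pi)$, and you bound every feasible objective from below by $g(r)=\sum_i[f_i-r]^++kr$, whose global minimum over $r$ is $\theta_k(\pi)$ by a convexity/slope argument. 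Both arguments are sound. The paper's duality route is shorter for a reader who accepts strong duality and the (slightly hand-waved) optimality of the integral solution to the primal packing LP, and it ties the formulation to the deviation-model literature it cites; your route is more elementary and self-contained, makes no appeal to duality, and has the added benefit of identifying the optimal multiplier $r$ explicitly as the $k$-th largest cost, which clarifies why the term $-(K-k)r$ appears in the objective. One small remark: your lower bound does not actually use the constraints $u_i\geq 0$ and $r\geq 0$, which is consistent with the fact that the representation of the sum of the $k$ largest values holds for arbitrary real inputs; the nonnegativity constraints are harmless here because the costs $f(\pi,S_i)$ are nonnegative, as you note when verifying feasibility.
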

\begin{proof}
	Consider the following linear programming problem:
	\begin{equation}
	\label{mipc1}
		\begin{array}{lllll}
				\max & \sum_{i=1}^K  \beta_i f(\pi,S_i) \\
				& \alpha_i + \beta_i \leq 1 & i\in [K] \\
				& \sum_{i=1}^K \alpha_i \geq (K-k)\\
				& \alpha_i, \beta_i \geq 0 & i\in [K]
		\end{array}
 	\end{equation}	
It is easy to see that an optimal solution to~(\ref{mipc1}) can be obtained by setting $\beta_{\sigma(i)}=1$ and $\alpha_{\sigma(i)}=0$ for $i=1\dots k$, $\beta_{\sigma(i)}=0$ and $\alpha_{\sigma(i)}=1$ for $i=k+1,\dots K$,
where $\sigma$ is such that $f(\pi,S_{\sigma(1)})\geq \dots \geq f(\pi,S_{\sigma(K)})$.
This gives us the maximum objective function value equal to $\theta_k(\pi)$. To complete the proof it is enough to observe that~(\ref{mipc0}) is the dual linear program to~(\ref{mipc1}).
\end{proof}
If $v_1\geq v_2\geq\dots\geq v_K$, then $v'_i\geq 0$ and~(\ref{cCC2}), (\ref{mipc00}),  (\ref{mipc0}) lead to the following mixed integer programming formulation for the problem:
\begin{equation}
	\label{mipc2}
		\begin{array}{lllll}
				\min & \sum_{k=1}^K v'_k (\sum_{i=1}^K  u_{ik} - (K-k) r_k) \\
				 & \text{Constraints } VC \\		
				 & r_k\leq u_{ik} & i,k\in[K] \\
				& u_{ik} \geq \sum_{j\in J} C_j w_j(S_i) & i,k\in[K] \\
				& u_{ik} \geq 0 & i,k\in[K]\\
				& r_k\geq 0 & k\in[K] \\
		\end{array}
 	\end{equation}
	
In order to construct the approximation algorithm we will also need the following easy observation:
	
\begin{obs}
\label{lemowa1}
		Let $(f_1,\dots, f_K)$ and $(g_1,\dots,g_K)$ be two 
		nonnegative real vectors
		such that $f_i\leq \gamma g_i$ for some constant $\gamma>0$. Then,
		 $\mathrm{owa}_{\pmb{v}}(f_1,\dots, f_k)\leq \gamma
		  \mathrm{owa}_{\pmb{v}}(g_1,\dots, g_K)$ for each $\pmb{v}$.
\end{obs}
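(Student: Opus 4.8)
\textbf{Proof plan for Observation~\ref{lemowa1}.}
The plan is to argue directly from the definition of the OWA operator, using only its symmetry and the elementary fact that sorting is monotone. First I would fix the two vectors $(f_1,\dots,f_K)$ and $(g_1,\dots,g_K)$ with $0\le f_i\le\gamma g_i$ for all $i\in[K]$, and introduce permutations $\sigma$ and $\tau$ of $[K]$ that sort the two vectors in nonincreasing order, i.e. $f_{\sigma(1)}\ge\dots\ge f_{\sigma(K)}$ and $g_{\tau(1)}\ge\dots\ge g_{\tau(K)}$. The key step is to show that the sorted entries satisfy $f_{\sigma(i)}\le\gamma g_{\tau(i)}$ for every $i\in[K]$; once this is established, multiplying by $v_i\ge 0$ and summing over $i$ gives
\[
\mathrm{owa}_{\pmb{v}}(f_1,\dots,f_K)=\sum_{i\in[K]}v_i f_{\sigma(i)}\le\gamma\sum_{i\in[K]}v_i g_{\tau(i)}=\gamma\,\mathrm{owa}_{\pmb{v}}(g_1,\dots,g_K),
\]
which is exactly the claim.

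The only nontrivial point — and the one I expect to be the main obstacle, though it is still routine — is the pointwise inequality between the \emph{sorted} sequences. Here is how I would handle it. Define $h_i=\gamma g_i$, so that $f_i\le h_i$ for all $i$, and note that $(h_1,\dots,h_K)$ sorted in nonincreasing order has $i$th entry $\gamma g_{\tau(i)}$. Thus it suffices to prove the general monotonicity statement: if $f_i\le h_i$ for all $i\in[K]$ and $f_{\sigma(1)}\ge\dots\ge f_{\sigma(K)}$, $h_{\rho(1)}\ge\dots\ge h_{\rho(K)}$ are the sorted rearrangements, then $f_{\sigma(i)}\le h_{\rho(i)}$ for each $i$. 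To see this, fix $i$ and consider the $i$ largest values of $h$, namely $h_{\rho(1)},\dots,h_{\rho(i)}$, which occupy the index set $R=\{\rho(1),\dots,\rho(i)\}$; among the $i$ largest values of $f$, at least one, say $f_{\sigma(j)}$ with $j\le i$, has its index in $R$ (since $\{\sigma(1),\dots,\sigma(i)\}$ and $R$ are both $i$-subsets of the $K$-set $[K]$, and $i+i>K$ would force overlap — but more simply, the set $\{\sigma(1),\dots,\sigma(i)\}$ cannot be contained in $[K]\setminus R$ because the latter has only $K-i<i$... if $i>K/2$; in general one uses a counting/pigeonhole argument on the first $i$ indices of each ordering). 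I would instead phrase it cleanly: $f_{\sigma(i)}$ is the $i$th largest of the $f$'s, hence $f_{\sigma(i)}\le f_\ell$ for at most $i-1$ indices $\ell$; meanwhile $h_{\rho(i)}$ is the $i$th largest of the $h$'s, so $h_{\rho(i)}\ge h_\ell$ for at least $K-i+1$ indices $\ell$. The index sets $\{\ell: f_\ell\ge f_{\sigma(i)}\}$ (size $\ge i$) and $\{\ell: h_\ell\le h_{\rho(i)}\}$ (size $\ge K-i+1$) together have total size $\ge K+1>K$, so they intersect; pick $\ell$ in the intersection, and then $f_{\sigma(i)}\le f_\ell\le h_\ell\le h_{\rho(i)}$, as desired.

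Finally I would note that this argument uses nothing about $f,g$ beyond nonnegativity (which is only needed to make the statement meaningful alongside the earlier properties of OWA) and the ordering of the $v_i$'s is irrelevant — the conclusion holds for every admissible weight vector $\pmb{v}$, which is what the statement asserts. The whole proof is short; the only care needed is the pigeonhole step comparing the two sorted orders, and everything else is a direct substitution into the definition of $\mathrm{owa}_{\pmb{v}}$.
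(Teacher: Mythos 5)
Your proof is correct and follows essentially the same route as the paper: the paper's own proof is a one-liner invoking the monotonicity of OWA (stated as a known property in Section~2) together with positive homogeneity, namely $\mathrm{owa}_{\pmb{v}}(f_1,\dots,f_K)\leq \mathrm{owa}_{\pmb{v}}(\gamma g_1,\dots,\gamma g_K)=\gamma\,\mathrm{owa}_{\pmb{v}}(g_1,\dots,g_K)$. The only difference is that you supply a full proof of the monotonicity step via the (correct) pigeonhole comparison of the two sorted sequences, whereas the paper simply cites that property.
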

\begin{proof}
 From the monotonicity of the OWA operator and the assumption $\gamma>0$,  it follows that ${\rm owa}_{\pmb{v}}(f_1,\dots,f_K)\leq {\rm owa}_{\pmb{v}}(\gamma g_1,\dots,\gamma g_K)=\gamma {\rm owa}_{\pmb{v}}(g_1,\dots,g_K)$.
\end{proof}

The approximation algorithm works as follows. We first solve the linear relaxation of~(\ref{mipc2}) in which $VC$ is replaced with $VC'$ . Clearly, this relaxation can be solved in polynomial time. Let $(C_1^*, \dots, C_n^*)$ be the relaxed optimal job completion times and $z^*$ be the optimal value of the relaxation. It holds $z^*={\rm owa}_{\pmb{v}}(\sum_{j\in J} C^*_j w_j(S_1), \dots, \sum_{j\in J} C^*_j w_j(S_K))$.
We now relabel the jobs so that $C^*_1\leq C^*_2\leq \dots\ C_n^*$ and form schedule $\pi=(1,2,\dots,n)$.
Since the vector $(C_j^*)$ satisfies $VC'$ it must also satisfy~(\ref{Schin}).  Hence, by setting $I=\{1,\dots,j\}$, we get
	$$\sum_{i=1}^j p_iC^*_i\geq \frac{1}{2}\left((\sum_{i=1}^j p_i)^2+\sum_{i=1}^j p_i^2\right)\geq \frac{1}{2} (\sum_{i=1}^j p_i)^2.
	$$
	Since $C^*_j\geq C_i^*$ for each $i\in \{1\dots j\}$, we get
	$C^*_j\sum_{i=1}^j p_i\geq \sum_{i=1}^j p_iC^*_i \geq \frac{1}{2}(\sum_{i=1}^j p_i)^2$
	and, finally $C_j=\sum_{i=1}^j p_i \leq 2 C^*_j$ for each $j\in J$ -- this reasoning is 
	the same as in~\cite{SH96b}.
	For each scenario $S_i$, $i\in [K]$, it holds 
		$f(\pi,S_i)=\sum_{j\in J} C_j w_j(S_i)\leq 2 \sum_{j\in J} C^*_j w_j(S_i)$, and Observation~\ref{lemowa1} implies 
		$${\rm OWA}(\pi)={\rm owa}_{\pmb{v}}(\sum_{j\in J} C_j w_j(S_1), \dots, \sum_{j\in J} C_j w_j(S_K))\leq 2z^*.$$
		Since $z^*$ is a lower bound on the value of an optimal solution, $\pi$ is a 2-approximate schedule. Let us summarize the obtained result.
\begin{thm}
\label{thmcappr1}
	If $v_1\geq v_2\geq \dots\geq v_K$, and job processing times (or weights) are deterministic, then \textsc{Min-Owa}~$1|prec|\sum w_j C_j$ is approximable within~2.
\end{thm}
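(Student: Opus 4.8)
The plan is to reuse the deviation-model MIP formulation~\eqref{mipc2} together with the classical Schulz-type rounding argument, exactly as laid out in the paragraph preceding the statement. The point of the theorem is merely to record the conclusion of that construction, so the ``proof'' I would write is a short wrap-up that stitches the pieces together. First I would note that, under the hypothesis $v_1\geq v_2\geq\dots\geq v_K$, the coefficients $v'_k=v_k-v_{k+1}$ are nonnegative, so that~\eqref{mipc00} together with the linear program~\eqref{mipc0} of the Lemma shows that $\mathrm{OWA}(\pi)=\sum_k v'_k\theta_k(\pi)$ is correctly modelled by the (minimization) objective of~\eqref{mipc2}; hence~\eqref{mipc2} is a valid exact formulation of \textsc{Min-Owa}~$1|prec|\sum w_jC_j$ in the deterministic-processing-times case.

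Next I would pass to the linear relaxation obtained by replacing $VC$ with $VC'$, solve it in polynomial time, and let $(C_1^*,\dots,C_n^*)$ be the relaxed completion times and $z^*$ the optimal relaxed value; since~\eqref{mipc2} is exact, $z^*$ is a lower bound on the optimum of the scheduling problem. Relabel so that $C_1^*\leq\dots\leq C_n^*$ and take $\pi=(1,\dots,n)$. Invoking~\eqref{Schin} with $I=\{1,\dots,j\}$ and the monotonicity $C_j^*\geq C_i^*$ for $i\leq j$ gives $C_j=\sum_{i\le j}p_i\leq 2C_j^*$ for every $j$, hence $f(\pi,S_i)=\sum_{j\in J}C_jw_j(S_i)\leq 2\sum_{j\in J}C_j^*w_j(S_i)$ for each scenario. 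Applying Observation~\ref{lemowa1} with $\gamma=2$ yields $\mathrm{OWA}(\pi)\leq 2\,\mathrm{owa}_{\pmb v}(\sum_j C_j^*w_j(S_1),\dots,\sum_j C_j^*w_j(S_K))=2z^*$, so $\pi$ is a $2$-approximate schedule. Finally, by the cost-preserving inversion of processing times and weights recalled at the start of the subsection, the same bound holds when job weights rather than processing times are deterministic, which completes the proof.

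The only genuinely non-routine ingredient is verifying that~\eqref{mipc2} really is an exact reformulation, i.e. that the inner $\min$ over $(u_{ik},r_k)$ in the objective evaluates, for each fixed feasible $\delta$ (hence fixed completion times $C_j$), to $\sum_k v'_k\theta_k(\pi)$; this is precisely the content of the Lemma combined with $v'_k\geq0$, so I would simply cite it. The rounding inequality $C_j\leq 2C_j^*$ is standard (it is the Schulz argument, referenced as in~\cite{SH96b}), and Observation~\ref{lemowa1} handles the aggregation, so no further work is needed. Thus I expect no real obstacle: the theorem is a summary, and the main care is just in quoting~\eqref{Schin}, the Lemma, and Observation~\ref{lemowa1} in the right order and in checking that the relaxation value is a valid lower bound because the integer program is exact.
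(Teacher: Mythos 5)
Your proposal is correct and follows essentially the same route as the paper: the theorem is indeed just a summary of the construction preceding it, and you assemble the exactness of~(\ref{mipc2}) (via the Lemma and $v'_k\geq 0$), the LP relaxation with the Schulz bound $C_j\leq 2C_j^*$ from~(\ref{Schin}), Observation~\ref{lemowa1}, and the processing-time/weight inversion in the same order the paper does. No gaps.
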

We now use Theorem~\ref{thmcappr1} to prove the following result:
\begin{thm}
	 \textsc{Min-Hurwicz}~$1|prec|\sum w_j C_j$ is approximable within~2, if job processing times (or weights) are deterministic.
\end{thm}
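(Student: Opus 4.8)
The plan is to reduce the Hurwicz criterion to an ordinary nonincreasing-weight OWA instance and then invoke Theorem~\ref{thmcappr1}. Recall that for the Hurwicz criterion we have $v_1=\alpha$, $v_K=1-\alpha$ and $v_j=0$ otherwise. If $\alpha\ge 1/2$ we may be tempted to apply Theorem~\ref{thmcappr1} directly, but the weight vector $(\alpha,0,\dots,0,1-\alpha)$ is not nonincreasing in general (when $1-\alpha>0$ the last coordinate exceeds the zero coordinates in between), so Theorem~\ref{thmcappr1} is not literally applicable. The key observation is that this does not matter: writing $\mathrm{OWA}(\pi)=\alpha\max_i f(\pi,S_i)+(1-\alpha)\min_i f(\pi,S_i)$ and using $\min_i f(\pi,S_i)\le \tfrac1K\sum_i f(\pi,S_i)$ together with $\tfrac1K\sum_i f(\pi,S_i)\le \max_i f(\pi,S_i)$, one sees that the Hurwicz value is squeezed between the max value and a suitable mixture, so approximating the max essentially suffices.

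More concretely, I would proceed as follows. First, run the $2$-approximation algorithm underlying Theorem~\ref{thmcappr1} with the weight vector $\pmb{v}=(1,0,\dots,0)$, i.e.\ for \textsc{Min-Max}~$1|prec|\sum w_jC_j$ (this is the case $\alpha=1$, which has nonincreasing weights), obtaining a schedule $\hat\pi$ with $\max_{i}f(\hat\pi,S_i)\le 2\min_{\pi}\max_i f(\pi,S_i)$. Let $\pi^*$ be an optimal solution to \textsc{Min-Hurwicz}. Then, using $\min_i f(\hat\pi,S_i)\le \max_i f(\hat\pi,S_i)$,
\[
\mathrm{OWA}(\hat\pi)=\alpha\max_i f(\hat\pi,S_i)+(1-\alpha)\min_i f(\hat\pi,S_i)\le \max_i f(\hat\pi,S_i)\le 2\min_{\pi}\max_i f(\pi,S_i).
\]
It remains to bound $\min_\pi\max_i f(\pi,S_i)$ by $\mathrm{OWA}(\pi^*)$. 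For this, note that $\max_i f(\pi,S_i)\le \tfrac{1}{\alpha}\bigl(\alpha\max_i f(\pi,S_i)+(1-\alpha)\min_i f(\pi,S_i)\bigr)=\tfrac1\alpha\,\mathrm{OWA}(\pi)$ provided $\alpha>0$, since $f$ is nonnegative. Hence $\min_\pi\max_i f(\pi,S_i)\le \max_i f(\pi^*,S_i)\le \tfrac1\alpha\mathrm{OWA}(\pi^*)$, giving $\mathrm{OWA}(\hat\pi)\le \tfrac2\alpha\mathrm{OWA}(\pi^*)$ — which is a ratio of $2/\alpha$, not $2$. So the naive argument only yields a constant depending on $\alpha$.

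To get the ratio down to $2$ uniformly, I expect the real argument to run the $2$-approximation of Theorem~\ref{thmcappr1} on the \emph{actual} Hurwicz weight vector after repairing its monotonicity, or to observe that the LP-rounding argument behind Theorem~\ref{thmcappr1} works verbatim here. The crucial point is that the rounding step produces a single schedule $\pi=(1,\dots,n)$ with $C_j\le 2C_j^*$ for every job $j$ simultaneously, hence $f(\pi,S_i)\le 2\sum_j C_j^* w_j(S_i)$ for \emph{every} scenario $i$, and then Observation~\ref{lemowa1} gives $\mathrm{owa}_{\pmb v}(f(\pi,S_1),\dots,f(\pi,S_K))\le 2\,\mathrm{owa}_{\pmb v}(\sum_j C_j^* w_j(S_1),\dots)$ for \emph{any} weight vector $\pmb v$, nonincreasing or not. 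The only place nonincreasing-ness of $\pmb v$ was used in deriving Theorem~\ref{thmcappr1} was to guarantee $v'_k=v_k-v_{k+1}\ge 0$ so that the MIP~(\ref{mipc2}) correctly models $\mathrm{OWA}$; but one does not actually need the MIP to be correct — one only needs the LP relaxation value $z^*$ to be a valid lower bound on $\min_\pi\mathrm{OWA}(\pi)$ and the rounded schedule to satisfy the per-scenario factor-$2$ bound. For the Hurwicz case the lower bound is immediate: $\min_\pi\mathrm{OWA}(\pi)\ge \min_\pi\min_i f(\pi,S_i)$ combined with $\min_\pi\alpha\max_i f(\pi,S_i)$, and both quantities are themselves lower-bounded by quantities computable from the LP; alternatively one simply solves \textsc{Min-Max} with an LP whose value lower-bounds $\max_i f(\pi,S_i)$, rounds, and uses the displayed chain of inequalities above with the LP optimum in place of $\min_\pi\max_i f(\pi,S_i)$, noting that the LP value is at most $\max_i f(\pi^*,S_i)$ and at most — this is the subtlety to nail down — $\mathrm{OWA}(\pi^*)/\alpha$ is replaced by a sharper estimate because the rounded schedule's min-scenario cost is also controlled. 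I expect the main obstacle to be exactly this last bookkeeping: showing that the factor $1/\alpha$ can be absorbed, which should follow because the LP-rounded schedule controls \emph{all} scenario costs (both the maximum and the minimum) by the factor $2$, so one may bound $\mathrm{OWA}(\hat\pi)\le 2\,\mathrm{owa}_{\pmb v}(\sum_j C_j^* w_j(S_1),\dots,\sum_j C_j^* w_j(S_K)) = 2z^*\le 2\,\mathrm{OWA}(\pi^*)$ directly, with $z^*$ the optimal value of the LP relaxation of the Hurwicz MIP, whose integrality relaxation is still a valid lower bound since relaxing only enlarges the feasible region.
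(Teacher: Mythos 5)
Your diagnosis of the naive route is right (reducing to a single \textsc{Min-Max} instance only yields a ratio of $2/\alpha$), and you correctly isolate the reusable half of the argument: the LP rounding gives $C_j\le 2C_j^*$ for every job simultaneously, so by Observation~\ref{lemowa1} the factor $2$ survives aggregation by \emph{any} OWA weight vector. But the other half --- producing, in polynomial time, a fractional point of $VC'$ whose OWA value lower-bounds $\min_\pi\mathrm{OWA}(\pi)$ --- is exactly where your proposal breaks down, and you do not close it. The ``LP relaxation of the Hurwicz MIP'' is not a valid lower bound: for the Hurwicz weights one has $v'_{K-1}=v_{K-1}-v_K=-(1-\alpha)<0$ for $K\ge 3$ (and $v'_1=2\alpha-1$ may be negative when $K=2$), and with a negative coefficient $v'_k$ the minimization in~(\ref{mipc2}) drives $\sum_i u_{ik}-(K-k)r_k$ to $+\infty$, so the formulation no longer models $\mathrm{OWA}$ at all and its relaxation value is meaningless. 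The deeper reason is that the Hurwicz objective $\alpha\max_i f_i+(1-\alpha)\min_i f_i$ is not convex in the scenario-cost vector (the $\min_i$ term is concave), so it cannot be minimized over the relaxed completion-time polytope by a single LP, and ``repairing the monotonicity'' of $(\alpha,0,\dots,0,1-\alpha)$ would change the criterion being approximated.

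The missing idea, which is what the paper uses, is to decompose over the scenario attaining the minimum: $\mathrm{OWA}(\pi)=\min_{k\in[K]}H_k(\pi)$ with $H_k(\pi)=\alpha\max_{i}\sum_j w_j(S_i)C_j(\pi)+(1-\alpha)\sum_j w_j(S_k)C_j(\pi)$, and then to absorb the second term into the max, giving $H_k(\pi)=\max_{i}\sum_j\hat w_j(S_i)C_j(\pi)$ with $\hat w_j(S_i)=\alpha w_j(S_i)+(1-\alpha)w_j(S_k)$. For each fixed $k$ this is a pure \textsc{Min-Max}~$1|prec|\sum w_jC_j$ objective over modified weight scenarios (the processing times remain deterministic), hence $2$-approximable by Theorem~\ref{thmcappr1}; running this for every $k\in[K]$ and returning the best of the $K$ schedules gives ratio $2$, since for the optimal $\pi^*$ and the index $k^*$ achieving its minimum one has $H_{k^*}(\hat\pi_{k^*})\le 2\min_\pi H_{k^*}(\pi)\le 2H_{k^*}(\pi^*)=2\,\mathrm{OWA}(\pi^*)$ while $\mathrm{OWA}(\hat\pi_{k^*})\le H_{k^*}(\hat\pi_{k^*})$. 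Without this (or an equivalent) decomposition your argument does not reach the claimed ratio of~$2$.
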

\begin{proof}
Assume that job processing times are deterministic (the reasoning for deterministic processing times is the same).
	The problem with the Hurwicz criterion can be rewritten as follows:
	$$ \min_{\pi \in \Pi} {\rm OWA}(\pi)= \min_{\pi \in \Pi}\min_{k\in [K]} H_k(\pi),$$
	where
	$$H_k(\pi)=\alpha \max_{i \in [K]} \sum_{j\in J} w_j(S_i)C_j(\pi) + (1-\alpha) \sum_{j\in J} w_j(S_k)C_j(\pi)=$$
	$$=\max_{i \in [K]}(\alpha \sum_{j\in J} w_j(S_i)C_j(\pi) + (1-\alpha) \sum_{j\in J} w_j(S_k)C_j(\pi))=\max_{i\in [K]} \sum_{j\in J} \hat{w}_j(S_i)C_j(\pi),$$
	where $\hat{w}_j(S_i)=\alpha w_j(S_i)+(1-\alpha)w_j(S_k)$. Hence the problem reduces to solving $K$ auxiliary 
	\textsc{Min-Max}~$1|prec|\sum w_j C_j$ problems. Since  
	\textsc{Min-Max}~$1|prec|\sum w_j C_j$  is approximable within~2 (see~\cite{MNO13}, or Theorem~\ref{thmcappr1}),
	 the theorem follows.
	\end{proof}
	
\section{Conclusion and open problems}
	
	In this paper we have proposed a new approach to scheduling problems with uncertain parameters. The key idea is to use the OWA operator to aggregate all possible values of the schedule cost. The weights in OWA allows decision makers to take their attitude towards a risk into account. In consequence, the main advantage of the proposed approach is to weaken the very conservative minmax criterion, traditionally used in robust optimization. Apart from proposing a general framework, we have discussed the computational properties of two basic single machine scheduling problems. We have shown that they have various computational and approximation properties, which make their analysis very challenging. However, there is still a number of open problems regarding the considered cases. For the problem with the maximum weighted tardiness criterion we do not know if the problem is weakly NP-hard when the number of scenarios is constant (the bounded case). It may be also the case that the pseudopolynomial algorithm designed for a fixed $K$ can be converted into a polynomial one by using a similar idea as for the Hurwicz criterion. We also do not know if the problem with the average criterion admits an approximation algorithm with a constant worst-case ratio (we only know that it is approximable within $K$ and not approximable within a ratio less than 7/6). For the problem with the weighted sum of completion times criterion the complexity of $\textsc{Min-Average}~1||\sum w_jC_j$ with uncertain processing times and weights is open.
The framework proposed in this paper can also be applied to other scheduling problems, for example to the single machine scheduling problem with the sum of late jobs criterion (the minmax version of this problem was discussed
	 in~\cite{AAK11,AC08}).
	 
\subsubsection*{Acknowledgements}
This work was 
partially supported by
 the National Center for Science (Narodowe Centrum Nauki), grant  2013/09/B/ST6/01525.	 
	 

\end{document}